\documentclass[twocolumn,secnumarabic,amssymb, nobibnotes, aps, prd, superscriptaddress]{revtex4-2}
\usepackage{amsmath,amssymb,amsfonts}
\usepackage{amsthm}
\usepackage{bm}
\usepackage{braket}

\usepackage{graphicx}
\usepackage{epstopdf}
\usepackage{booktabs}

\usepackage{float}
\usepackage[justification=centering]{caption}
\usepackage{subcaption}

\usepackage[ruled,vlined]{algorithm2e}
\usepackage[title]{appendix}

\usepackage{adjustbox}
\usepackage{textcomp}
\usepackage{mathrsfs}
\usepackage{booktabs}

\usepackage[
bookmarks=true,
colorlinks=true,
linkcolor=blue,
urlcolor=blue,
citecolor=blue,
bookmarks=true,
hyperindex=true
]{hyperref}

\makeatletter

\newcommand{\Rmnum}[1]{\expandafter\@slowromancap\romannumeral #1@}
\makeatother

\begin{document}

\title{Local distinguishability of six bipartite orthogonal product states}

\author{Guang-Bao \surname{Xu}}
\affiliation{College of Computer Science and Engineering, Shandong University of Science and Technology, Qingdao, 266590, China}

\author{Hua-Kun \surname{Wang}}
\affiliation{College of Computer Science and Engineering, Shandong University of Science and Technology, Qingdao, 266590, China}

\author{Yu-Guang \surname{Yang}}
\affiliation{College of Cyberspace Science and Technology, Beijing University of Technology, Beijing, 100124, China}

\author{Dong-Huan \surname{Jiang}}
\email{donghuan\_jiang@163.com}
\affiliation{College of Mathematics and Systems Science, Shandong University of Science and Technology, Qingdao, 266590, China}

\date{\today}

\begin{abstract}
It is necessary to investigate the local distinguishability of orthogonal quantum state sets, as their adoption in protocol design helps diminish quantum state transmission and cut operational costs. In this paper, we explore the local distinguishability of six orthogonal product states (OPSs) on any bipartite quantum system. We classify different sets of six bipartite OPSs into eight categories by using the vectors of the numbers of pairwise orthogonality relations, where any two states are orthogonal on only one subsystem within each set. We find that these eight categories contain a total of 78 distinct cases, all but five of which are perfectly distinguishable via local operations and classical communication (LOCC). Furthermore, we discuss the local distinguishability of those five distinct cases in detail. Our work explicitly characterizes the local distinguishability of six bipartite OPSs.
	
\end{abstract}

\maketitle

\section{\label{sec1}Introduction\protect}
In a quantum protocol, the constituent particles of a composite quantum state may be distributed among spatially separated participants. By identifying the target quantum state through local operations and classical communication (LOCC), participants can not only reduce the overhead of quantum state transmission, but also substantially cut down the implementation cost of the quantum protocol. A set of orthogonal composite quantum states is locally distinguishable whenever its elements can be perfectly distinguished by LOCC; otherwise, it is locally indistinguishable or nonlocal. 

Typically, it is widely held that the local indistinguishability of quantum states is inherently linked to entanglement. In 1999, Bennett et al. ~\cite{1} constructed a set of nine bipartite orthogonal product states (OPSs) and demonstrated that the set is locally indistinguishable. This phenomenon, referred to as nonlocality without entanglement~\cite{1}, overturns the prevailing cognitive perception. Inspired and motivated by the work of Bennett et al., numerous relevant achievements \cite{J2000,J2002,DiVincenzo2003,Feng2009,Duan2009,Duan2014} have since emerged. Walgate et al. \cite{J2000} proved that any two orthogonal pure states can be perfectly identified by LOCC. Furthermore, walgate et al. \cite{J2002} gave a necessary and sufficient conditions for a set of $2\otimes2$ quantum states to be exactly distinguishable. DiVincenzo et al. \cite{DiVincenzo2003} proved that any set of no more than three OPSs on a multipartite system or no more than four OPSs on a bipartite system is locally distinguishable. Feng et al. gave structure characterizations for all feasible sets of OPSs that cannot be perfectly distinguished by LOCC on $3\otimes3$ and $2\otimes2\otimes2$ quantum systems. All these works have substantially enriched the theoretical research concerning the local distinguishability of quantum state sets.

Since Bennett et al. gave two methods to construct nonlocal sets of OPSs on $3\otimes3$ and $2\otimes2\otimes2$ quantum systems, the construction methods of nonlocal state sets in high-dimensional systems have drawn considerable research interest. Several constructions for nonlocal OPS sets in high-dimensional bipartite systems have been proposed in Refs. \cite{2,5,6,7,Cao2023}. Compared with bipartite cases, the constructions for nonlocal OPS sets in high-dimensional multipartite systems are more complex. Xu et al. \cite{8} gave a method to construct a nonlocal set of OPSs in high-dimensional multipartite systems via cyclic permutation of subsystems. Subsequently, numerous methods \cite{9,10,11,13} for constructing nonlocal sets of multipartite OPSs have been proposed one after another. In 2019, Halder et al. \cite{Halder2019} proposed the concept of strong quantum nonlocality without entanglement. Later, Many constructions of OPS sets with strong nonlocality \cite{Yuan2020,Shi2020,WangYL2021,FSHI2022,Fshi2022,Gao2023} have been proposed. These works demonstrate that it is of great significance to explore the construction methods of nonlocal sets of OPSs.

While much work has focused on the construction of nonlocal OPS sets, the exploration of local distinguishability of OPS sets with fixed cardinality is likewise of great academic significance. In 2025, Xu et al. \cite{GBX2026} gave the local distinguishability of four OPSs in arbitrary multipartite systems. Furthermore, they \cite{XHWYJ} resolved the local distinguishability problem of five OPSs in bipartite and tripartite quantum systems recently. A natural question now arises: what about the local distinguishability of six OPSs in an arbitrary bipartite system?

In this paper, we investigate the local distinguishability of  six bipartite OPSs, where any two of these OPSs are orthogonal on only one subsystem. By the vectors of the numbers of pairwise orthogonality relations, We first classify different sets of six bipartite OPSs into eight categories. Subsequently, we conduct a case analysis for every category. Among all the feasible 78 cases, only five cannot be perfectly distinguished by LOCC. Our results refine the theory of quantum nonlocality and offer insights for further research. 

The rest of this paper is organized as follows. Some foundational concepts and essential lemmas are given in Sec.~\ref{sec2}. In Sec.~\ref{sec3}, we systematically give three theorems, which will be used in the proof of local distinguishability of six bipartite OPSs. In Sec.~\ref{sec4}, We give the local distinguishability of six bipartite OPSs and provide a detailed proof. We discuss five special cases of six bipartite OPSs that cannot be locally distinguished in Sec.~\ref{sec5}. A brief conclusion is given in Sec.~\ref{sec6}.

\section{\label{sec2}Preliminaries}
\theoremstyle{remark}
\newtheorem{definition}{\indent Definition}
\newtheorem{lemma}{\indent Lemma}
\newtheorem{theorem}{\indent Theorem}
\newtheorem{corollary}{\indent Corollary}
\def\QEDclosed{\mbox{\rule[0pt]{1.3ex}{1.3ex}}}
\def\QED{\QEDclosed}
\def\proof{\indent{\em Proof}.}
\def\endproof{\hspace*{\fill}~\QED\par\endtrivlist\unskip}

This section presents the definitions and lemmas required for what follows.

\begin{definition}\label{def1}\cite{Bondy2008}
	(Undirected graph) An undirected graph $G$ is a pair $(V,\,E)$ where $V$ is a finite, non-empty set of vertices and $E$ is a collection of unordered pairs of distinct vertices, called edges. For an edge $e=\{u,v\}\in E$, we say that $e$ joins the vertices $u$ and $v$; equivalently, $u$ and $v$ are incident with $e$.
\end{definition}

\begin{definition}\label{def2}\cite{Bondy2008}
Given graph $G=(V,\,E)$, an edge $e\in E$ is called a loop if its two endpoints are the identical vertex, i.e., $e={v,\,v}$ for some $v\in V$.
\end{definition}

\begin{definition}\label{def3}\cite{Bondy2008}
 (Simple graph) A simple graph is an undirected graph with neither loops nor multiple edges.
\end{definition}

\begin{definition}\label{def4}\cite{Bondy2008} 
(Adjacency matrix) The adjacency matrix of a simple undirected graph $G$ with $n$ vertices $\{v_{j}:j=1,\,2,\,\ldots,\,n\}$ is an $n\times n$ symmetric binary matrix $A=(a_{ij})$, whose entries satisfy $a_{ij}=1$ if there exists an undirected edge connecting vertices $v_{i}$ and $v_{j}$, and $a_{ij}=0$ otherwise.
\end{definition}

\begin{definition}\label{def5}
(Two-colored adjacency matrix) The adjacency matrix of a simple undirected graph $G$ with $n$ vertices $\{v_{j}:j=1,\,2,\,\ldots,\,n\}$ is an $n\times n$ symmetric matrix $A=(a_{ij})$, whose entries satisfy $a_{ij}=1$ if there exists an edge with color $1$ connecting vertices $v_{i}$ and $v_{j}$, $a_{ij}=-1$ if there exists an edge with color $2$ connecting vertices $v_{i}$ and $v_{j}$, and $a_{ij}=0$ if there does not exist an edge connecting vertices $v_{i}$ and $v_{j}$. 
\end{definition}

\begin{definition}\label{def6}\cite{DiVincenzo2003} (Orthogonality graph)
	Let $\mathcal{H}=\bigotimes^{m}_{i=1}\mathcal{H}_{i}$ be an $m$-partite Hilbert space with dim $\mathcal{H}_{i}=d_{i}$. Let 
$S=\{\vert \psi_{j}\rangle\equiv\bigotimes^{m}_{i=1}\vert \varphi_{i,j}\rangle|j=1,2,...,n\}$ be an orthogonal product basis in $\mathcal{H}$. We represent $S$ as a graph $G = (V, E_{1}\cup E_{2}\cup \cdots \cup E_{m})$, where the set of edges $E_{i}$ have color $i$. The states $\vert \psi_{j} \rangle \in S$ are represented as the vertices $V$. There exists an edge $e$ of color $i$ between the vertices $v_{k}$ and $v_{l}$, i.e., $e \in E_{i}$, when states $\vert \phi_{k}\rangle$ and $\vert \phi_{l}\rangle$ are orthogonal on $\mathcal{H}_{i}$. Since all the states in the product basis are mutually orthogonal, every vertex is connected to all the other vertices by at least one edge of some color. The graph $G$ is called the orthogonality graph of the product basis.
\end{definition}

\begin{definition}\label{def7}\cite{XHWYJ} 
	(The vector of the numbers of pairwise orthogonality relations) A vector $\boldsymbol{(a, b)}$ is used to represent the numbers of pairwise orthogonality relations among the members of a set of bipartite OPSs, where $\boldsymbol{a}$ denotes the number of pairwise orthogonality relations among those states on the first subsystem while $\boldsymbol{b}$ denotes the number of pairwise orthogonality relations on the second subsystem. For simplicity, we call $\boldsymbol{(a, b)}$ the vector of the numbers of pairwise orthogonality relations.
\end{definition}

\begin{definition}\label{def8}\cite{Bondy2008}
	(Degree in a simple graph) Let $G=(V,\,E)$ be an undirected simple graph, and let $v\in V$ be a vertex of $G$. The degree of $v$,  denoted as $deg$($v$), is the number of edges of $G$ incident with $v$.
\end{definition}

\begin{definition}\label{def9}\cite{XHWYJ}
	Let $G = (V,\,E)$ be an orthogonality graph of a set of multipartite OPSs, and let $v\in V$ be a vertex of $G$. The degree of $i$-th party of $v$,  denoted as $deg_{i}$($v$), is the number of edges of $G$ incident with vertex $v$ in color $i$.
\end{definition}

\begin{definition}\label{def10}\cite{21}
	Suppose a measurement described by measurement operators $\{M_{m}|m=1,2,...,d\}$ is performed upon a quantum system in the state $\vert \psi \rangle$. We define $E_{m}\equiv M_{m}^{\dagger}M_{m}$, where $E_{m}$ is a positive operator such that $\sum_{m}E_{m}=I$. The operators $E_{m}$ are known as the positive operator-valued measure (POVM) elements associated with the measurement. The complete set $\{E_{m}\}$ is known as a POVM.
\end{definition}

\begin{lemma}\label{lm1} \cite{DiVincenzo2003}
	Let $S$ be a set of bipartite OPSs with four or fewer members in any dimension (that allows for this set of OPSs). The set $S$ is distinguishable by local incomplete von Neumann measurements and classical communication.
\end{lemma}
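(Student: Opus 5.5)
This is the result of DiVincenzo et al.: any set of at most four bipartite orthogonal product states can be perfectly distinguished by local von Neumann measurements and classical communication. We argue from the orthogonality graph of Definition~\ref{def6}, restricted to two colors. Write the states as $\vert\psi_j\rangle=\vert\alpha_j\rangle_A\vert\beta_j\rangle_B$ for $j=1,\ldots,n$ with $n\le 4$. Every pair of states is orthogonal on $A$ or on $B$, so every pair of vertices is joined by an edge of color~1 or color~2 (or both). The strategy is induction on $n$. We want one party, say Alice, to perform a von Neumann measurement, possibly incomplete, that keeps the states orthogonal and splits them into groups. Each group must then be strictly smaller than the original set, or else be distinguishable by Bob alone. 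The cases $n\le 2$ are immediate: for two states, the orthogonal party simply measures in a basis containing its two local vectors, or a projector onto one of them.

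\textbf{Key step: finding a good first measurement.} Suppose Alice has a vector $\vert\alpha_k\rangle$ that is orthogonal to, or equal to, each of the other $\vert\alpha_j\rangle$. Then she measures the projectors $P=\vert\alpha_k\rangle\langle\alpha_k\vert$ and $I-P$. This leaves every state undisturbed up to normalization, and it separates the states into two groups:
\begin{itemize}
\item the states whose $A$-part equals $\vert\alpha_k\rangle$, which are then pairwise orthogonal on $B$ and can be finished by Bob;
\item the remaining states, which form a smaller set handled by induction.
\end{itemize}
More generally, Alice may group her vectors into classes that are mutually orthogonal across classes. She then projects onto the subspaces spanned by the classes. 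This is non-disturbing, because any state in one class is annihilated by the projector onto another class.

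So the core of the argument is a combinatorial claim. For $n\le 4$, either some party's local vectors split nontrivially into mutually orthogonal clusters, or else one party's vectors are all pairwise orthogonal and that party finishes alone. We prove it by checking graph configurations up to symmetry. Look at the complement of the color-1 graph on the $A$ side, i.e.\ the graph of pairs that are not orthogonal on $A$. If this graph is disconnected, its connected components give Alice her orthogonal clusters. Otherwise it is connected. All such pairs must be orthogonal on $B$, so the color-2 graph contains a connected spanning subgraph. Now apply the same test to Bob's side. The case to exclude is that both non-orthogonality graphs are connected.

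\textbf{Main obstacle.} The obstacle is showing that with $n\le4$ vertices the two non-orthogonality graphs cannot both be connected. Each connected graph on $n$ vertices needs at least $n-1$ edges. An edge of Alice's non-orthogonality graph is a pair orthogonal on $B$, hence not an edge of Bob's non-orthogonality graph, so the two graphs are edge-disjoint. Together they therefore need at least $2n-2$ of the $\binom n2$ pairs. For $n=4$ this is $6\le 6$, so both graphs would have to be spanning trees of $K_4$ that partition its edges. The possibilities are two edge-disjoint paths of length three, such as $1$-$2$-$3$-$4$ and $2$-$4$-$1$-$3$. In that configuration Alice's vectors form a path of non-orthogonality: $\vert\alpha_1\rangle\not\perp\vert\alpha_2\rangle$, $\vert\alpha_2\rangle\not\perp\vert\alpha_3\rangle$, $\vert\alpha_3\rangle\not\perp\vert\alpha_4\rangle$, while every other pair is orthogonal. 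Here we intend a dimension argument. Since $\vert\alpha_1\rangle\perp\vert\alpha_3\rangle$ and $\vert\alpha_1\rangle\perp\vert\alpha_4\rangle$, Alice can measure the projector onto $\vert\alpha_1\rangle$. This should be non-disturbing once all other components are accounted for, which probably requires projecting onto $\mathrm{span}\{\vert\alpha_1\rangle\}$ versus its complement after a suitable orthogonalization. Making this rigorous is the delicate step. If it fails, we fall back on the explicit LOCC protocol of DiVincenzo et al.\ for this single residual configuration.
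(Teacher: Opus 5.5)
\textbf{There is a genuine gap: the residual $n=4$ case is left unproven, and the justification you sketch for it cannot work.}

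Your treatment of $n=2,3$ is sound. The non-orthogonality graphs $N_A$ and $N_B$ are edge-disjoint, so for $n=2,3$ one of them is disconnected. Projecting onto the spans of its components is then non-disturbing. The gap is the residual $n=4$ configuration of two complementary paths. It is realizable, for example with $\vert\alpha_1\rangle=\vert0\rangle$, $\vert\alpha_2\rangle\propto\vert0\rangle+\vert1\rangle$, $\vert\alpha_3\rangle\propto\vert1\rangle+\vert2\rangle$, $\vert\alpha_4\rangle=\vert2\rangle$, and the analogous choice for Bob. This is the only nontrivial case, and your argument stops there.

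The justification you hope for cannot exist. A projective measurement leaves every $\vert\alpha_j\rangle$ undisturbed only if each vector lies in the range of a single projector. Non-orthogonal vectors must then share that range. Connectivity of $N_A$ forces all four vectors into one range, so every informative measurement by Alice disturbs some state; the same holds for Bob. Hence your ``combinatorial claim'' is false for $n=4$, and no ``suitable orthogonalization'' rescues it. Falling back on the protocol of DiVincenzo et al.\ is circular, since that is exactly the statement at issue. The paper itself imports this lemma from that reference without proof.

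The missing idea is that the disturbance can be harmless. This is exactly the mechanism of the paper's Theorem~\ref{th2}: state 1 is orthogonal to state 2 on $B$ and to states 3 and 4 on $A$. Let Alice measure $P=\vert\alpha_1\rangle\langle\alpha_1\vert$ and $I-P$.
\begin{itemize}
\item Outcome $P$ annihilates states 3 and 4. It leaves states 1 and 2, which are orthogonal on $B$, so Bob finishes.
\item Outcome $I-P$ annihilates state 1, fixes $\vert\alpha_3\rangle$ and $\vert\alpha_4\rangle$, and maps $\vert\alpha_2\rangle$ to $(I-P)\vert\alpha_2\rangle$. Since $I-P$ is a self-adjoint projector fixing $\vert\alpha_3\rangle$ and $\vert\alpha_4\rangle$, we have $\langle\alpha_2\vert(I-P)\vert\alpha_j\rangle=\langle\alpha_2\vert\alpha_j\rangle$ for $j=3,4$. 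So $\{2,3,4\}$ is still a set of orthogonal product states, and your $n=3$ case applies.
\end{itemize}

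Once you allow such measurements, the connectivity analysis for $n=4$ becomes unnecessary. Every vertex has three neighbours, and by pigeonhole two of them are orthogonal to it on the same side. The same measurement then always reduces four states to at most three.
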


 Note that von Neumann measurement, namely the projection measurement, is a special POVM.

By Theorem 2 and Theorem 3 of  Ref. \cite{XHWYJ}, we have the following Lemma directly.

\begin{lemma}\label{lm2}\cite{XHWYJ}
  	Five bipartite OPSs, any two of which are orthogonal on only one subsystem, cannot be perfectly distinguished by LOCC or can be locally distinguished with a certain probability when 
all vertices $v_{j}$ in their orthogonality graph satisfy  $deg_{1}(v_{j})=deg_{2}(v_{j})=2$ for $j=1,\,2,\,\ldots,\,5$.  
    \end{lemma}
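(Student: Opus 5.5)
The plan is to prove the statement directly from the structure of the orthogonality graph, rather than only quoting Theorems 2 and 3 of Ref.~\cite{XHWYJ}.

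\emph{Structure of the graph.} Write the five states as $\vert\psi_j\rangle=\vert a_j\rangle\vert b_j\rangle$, $j=1,\ldots,5$. Since every pair is orthogonal on exactly one subsystem, the color-1 and color-2 edge sets partition the edges of the complete graph $K_5$. The hypothesis $deg_1(v_j)=deg_2(v_j)=2$ makes each color class a $2$-regular simple graph on five vertices. The only such graph is the $5$-cycle, because a triangle would leave two vertices that cannot form a cycle. The complement of $C_5$ is again $C_5$. So, after relabeling, $\langle a_j\vert a_{j+1}\rangle=0$ and $\langle b_j\vert b_{j+2}\rangle=0$ (indices mod $5$), and all other local inner products are nonzero.

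\emph{A non-disturbing first measurement must be trivial.} In any perfect LOCC protocol some party measures first, say Alice, with POVM element $E=M^{\dagger}M$, and the post-measurement states must stay mutually orthogonal. For a pair $(i,j)$ with $\langle b_i\vert b_j\rangle\neq 0$ this requires $\langle a_i\vert E\vert a_j\rangle=0$. Here these are exactly the adjacent pairs $(j,j+1)$. The main step, and the expected obstacle, is to show that these five conditions, together with positivity of $E$ and the nonorthogonality pattern, force $E$ to act as $c\,I$ on $\mathrm{span}\{\vert a_j\rangle\}$. Such an $E$ carries no information about which state is present.

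First, the span has dimension at least $3$. In dimension $2$, the chain $a_1\perp a_2\perp a_3$ forces $a_3\propto a_1$; continuing along the cycle gives $a_5\propto a_1$, which contradicts $a_5\perp a_1$.

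Second, I would take $\vert a_1\rangle,\vert a_2\rangle$ orthonormal and complete them using $\vert a_3\rangle$, which is orthogonal to $\vert a_2\rangle$ but not to $\vert a_1\rangle$. I would then write $E$ in this basis and impose the five vanishing off-diagonal conditions in turn. The nonzero overlaps $\langle a_j\vert a_{j+2}\rangle$ should propagate equality of the diagonal entries of $E$ around the odd cycle and kill its off-diagonal entries. This is the familiar Bennett-type argument for the tile/pentagon structure, and the odd length of the cycle is what closes the chain.

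The same argument applies verbatim to Bob, using the color-2 cycle $(1,3,5,2,4)$. Therefore neither party can begin with a nontrivial orthogonality-preserving measurement, and the set is not perfectly LOCC-distinguishable.

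\emph{Probabilistic part.} If the local supports have components outside $\mathrm{span}\{\vert a_j\rangle\}$, or if one accepts disturbing measurements, Alice or Bob can still measure in a basis containing some $\vert a_j\rangle$ or $\vert b_j\rangle$. This identifies or excludes states with nonzero probability. It gives the alternative "locally distinguished with a certain probability" in the statement.
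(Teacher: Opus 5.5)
Your main step is false in the generality you need, not merely unfinished. The claim that the five conditions $\langle a_j\vert E\vert a_{j+1}\rangle=0$ force $E$ to act as $c\,I$ on $\mathrm{span}\{\vert a_j\rangle\}$ holds only when that span is three-dimensional. You show the dimension is at least $3$. But ``complete $\vert a_1\rangle,\vert a_2\rangle$ using $\vert a_3\rangle$'' tacitly assumes it is exactly $3$, and nothing in the hypotheses bounds it. Once the span is larger, nontrivial orthogonality-preserving measurements exist, and the set can even be perfectly LOCC-distinguishable.

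Here is a counterexample in $\mathbb{C}^5\otimes\mathbb{C}^3$ (unnormalized): $\vert\psi_1\rangle=\vert 0\rangle\vert 0\rangle$, $\vert\psi_2\rangle=(\vert 3\rangle+\vert 4\rangle)(\vert 0\rangle+\vert 1\rangle-\vert 2\rangle)$, $\vert\psi_3\rangle=(\vert 0\rangle+\vert 1\rangle+\vert 2\rangle)\vert 1\rangle$, $\vert\psi_4\rangle=(\vert 0\rangle-\vert 1\rangle+\vert 3\rangle)(\vert 1\rangle+\vert 2\rangle)$, $\vert\psi_5\rangle=(\vert 2\rangle+\vert 4\rangle)(\vert 0\rangle+\vert 2\rangle)$.

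\begin{itemize}
\item Alice's parts are orthogonal exactly for the pairs $(1,2),(2,3),(3,4),(4,5),(5,1)$.
\item Bob's parts are orthogonal exactly for $(1,3),(3,5),(5,2),(2,4),(4,1)$.
\item So every pair is orthogonal on exactly one side, and $deg_1=deg_2=2$ at every vertex.
\end{itemize}

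Now let Alice measure in the basis $\{(\vert 0\rangle\pm\vert 1\rangle)/\sqrt{2},\vert 2\rangle,\vert 3\rangle,\vert 4\rangle\}$. The surviving candidates are $\{1,3\}$, $\{1,4\}$, $\{3,5\}$, $\{2,4\}$, $\{2,5\}$ respectively, and each pair is orthogonal on Bob's side, so Bob finishes the discrimination. Here $\mathrm{span}\{\vert a_j\rangle\}=\mathbb{C}^5$, and for example $E=\vert 3\rangle\langle 3\vert$ satisfies all five of your conditions without being $\propto I$. Hence no argument can show that every set with this orthogonality graph fails to be perfectly LOCC-distinguishable. Read that way, the lemma is false.

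What your argument does prove is the case where both local spans are three-dimensional, for instance $3\otimes 3$; the Tiles UPB has exactly this graph.
\begin{itemize}
\item Take $\vert a_1\rangle,\vert a_2\rangle$ orthonormal. The conditions for the pairs $(1,2),(2,3),(5,1)$ make $E$ diagonal.
\item The conditions for $(3,4),(4,5)$ then equalize its diagonal entries, using the nonvanishing overlaps of non-adjacent pairs.
\item The same argument for Bob gives indistinguishability.
\end{itemize}

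The paper does not attempt a universal proof. It imports Theorems 2 and 3 of \cite{XHWYJ}, and the disjunction in the statement reflects that such sets can fall on either side. The paper follows the same pattern itself for case (21) in Eqs.~(\ref{eq8}) and (\ref{eq9}). There the probabilistic instance is obtained precisely by enlarging Bob's local span with $\vert 3\rangle$.

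This also shows your last paragraph is inconsistent with your main step. Every $\vert a_j\rangle$ lies in $\mathrm{span}\{\vert a_j\rangle\}$, so there are no ``components outside'' it. Nontrivial first measurements come from a local span of dimension above $3$, which is exactly the situation your main step claims to exclude.
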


\section{\label{sec3}Local distinguishability of bipartite orthogonal product states}

In this section, we present three Theorems that can be used to simplify the proof of local distinguishability of six bipartite OPSs. For convenience, when discussing the local discrimination of bipartite quantum states, we always assume that Alice holds the first subsystem of a given state and Bob its second subsystem. It should be noted that any two product states discussed in this paper are orthogonal on only one subsystem.

\begin{theorem}\label{th1}
  For a set of bipartite OPSs, if one state is orthogonal to all other members on the same subsystem, then the state can be perfectly identified by LOCC, while the orthogonality relations among other members of the set can remain invariant. 
\end{theorem}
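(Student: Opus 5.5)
Suppose the special state is $\vert\psi_1\rangle=\vert\alpha_1\rangle_A\vert\beta_1\rangle_B$ and that it is orthogonal to every other member $\vert\psi_j\rangle=\vert\alpha_j\rangle_A\vert\beta_j\rangle_B$ ($j=2,\ldots,n$) on the first subsystem, so $\langle\alpha_1\vert\alpha_j\rangle=0$ for all $j\ge 2$. The case where the common subsystem is Bob's is symmetric. The plan is to let Alice perform a two-outcome projective measurement $\{P,\,I-P\}$ with $P=\vert\alpha_1\rangle\langle\alpha_1\vert$.

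If $\vert\psi_1\rangle$ was sent, the outcome $P$ occurs with certainty. If any $\vert\psi_j\rangle$ with $j\ge2$ was sent, the outcome $P$ has probability $\Vert P\vert\alpha_j\rangle\Vert^2=\vert\langle\alpha_1\vert\alpha_j\rangle\vert^2=0$. So outcome $P$ identifies $\vert\psi_1\rangle$ perfectly, and outcome $I-P$ excludes it. This is one local von Neumann measurement followed by broadcasting one classical bit, so it is an LOCC step.

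The second claim needs more care: the orthogonality relations among the remaining states must be unchanged after outcome $I-P$. Because $\langle\alpha_1\vert\alpha_j\rangle=0$, we have $(I-P)\vert\alpha_j\rangle=\vert\alpha_j\rangle$ for every $j\ge2$. The post-measurement states are therefore exactly the original $\vert\psi_j\rangle$, not just orthogonal-preserving images of them. Hence every inner product $\langle\alpha_j\vert\alpha_k\rangle$ and $\langle\beta_j\vert\beta_k\rangle$ with $j,k\ge2$ is unchanged. In particular, the coloured edges among $v_2,\ldots,v_n$ in the orthogonality graph (Definition~\ref{def6}) are exactly those of the original graph with vertex $v_1$ deleted. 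The vector of pairwise orthogonality relations (Definition~\ref{def7}) then loses exactly $n-1$ from the first coordinate, because $v_1$'s edges were all of colour 1, and nothing else changes. Since Bob's subsystem is untouched, his relations are trivially preserved as well.

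The main obstacle is noticing that a general projector $I-P$ could change inner products on Alice's side. For example, it could make two previously non-orthogonal $\vert\alpha_j\rangle,\vert\alpha_k\rangle$ orthogonal, or break an existing orthogonality. The resolution is that each $\vert\alpha_j\rangle$ with $j\ge2$ already lies in the support of $I-P$, so the projector acts as the identity on all relevant vectors. I would state this explicitly, since it is the basis for later reductions such as applying Lemma~\ref{lm1} or Lemma~\ref{lm2} to the remaining set.
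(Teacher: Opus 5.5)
Your proof is correct and is essentially the paper's argument: Alice measures $\{|\alpha_1\rangle\langle\alpha_1|,\, I-|\alpha_1\rangle\langle\alpha_1|\}$, and the first outcome certifies the special state. Since $(I-|\alpha_1\rangle\langle\alpha_1|)|\alpha_j\rangle=|\alpha_j\rangle$ for every other $j$ and Bob does nothing, the remaining states, and hence all their orthogonality relations, are left exactly as they were. Your side remark that the relation vector loses exactly $n-1$ from its first coordinate relies on the paper's standing convention that any two states are orthogonal on only one subsystem; the theorem itself does not need that remark.
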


\begin{figure}[H]
	\centering
	\includegraphics[width=0.25\textwidth]{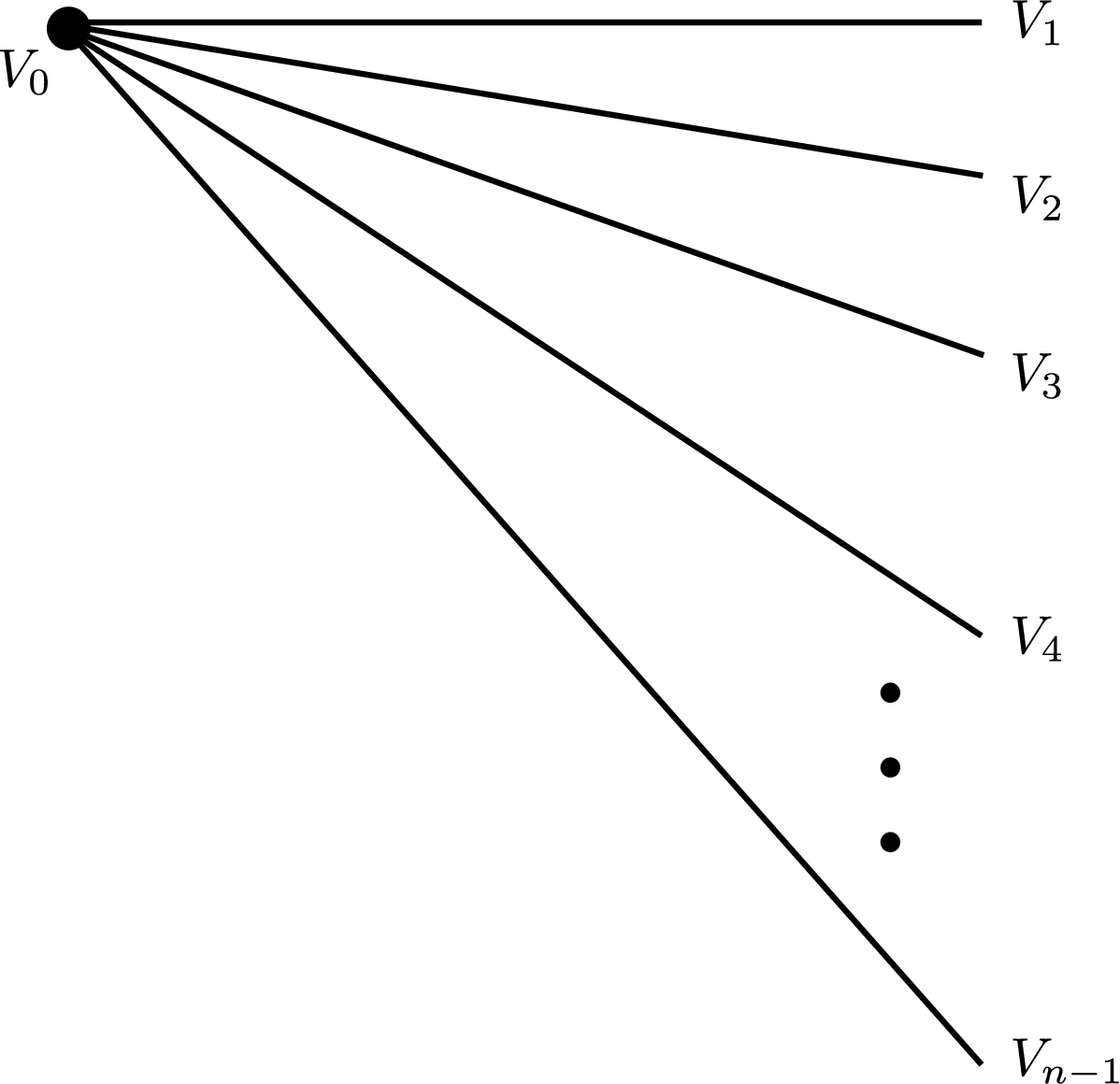}%
	\caption{Orthogonality relations between bipartite product state $V_{0}$ and bipartite product states $\{V_{i}\,|i=1,\,2,\,\ldots,\,n-1\}$ on 
    the first subsystem.\label{fig001}}
\end{figure}

\noindent\textbf{Proof.}
Without loss of generality, as shown in Fig.~\ref{fig001}, suppose that the bipartite product state $V_{0}$ is orthogonal to each of bipartite product states $\{V_{i}:\,1,\,2,\,\ldots,\,n-1\}$ on the first subsystem. For ease of presentation, we assume that the first subsystem of state $V_{i}$ is denoted by $|\alpha_{i}\rangle_{A}$, where $\langle\alpha_{i}|\alpha_{i}\rangle_{A}=1$ for $i=0,\,1,\,\ldots,\,n-1$.

Let Alice perform a POVM measurement with the operators $M_{1}=\vert \alpha_{0}\rangle_{A}\langle \alpha_{0}\vert$ and $M_{2}=I-\vert \alpha_{0}\rangle_{A}\langle \alpha_{0}\vert$ on the first subsystem of the measured state. Since $\vert \alpha_{0}\rangle_{A}$ is orthogonal to each of  $\{\vert \alpha_{i}\rangle_{A}:\, i=1,\,2,\,\ldots,\,n-1\}$, the measurement outcome 1 guarantees that the measured state must be state $V_{0}$. If the outcome is 2, the subsystem of the measured state is kept invariant since 
\begin{equation}
\nonumber
	\begin{aligned}
M_{2}|\alpha_{i}\rangle_{A}=(I-\vert \alpha_{0}\rangle_{A}\langle \alpha_{0}\vert)|\alpha_{i}\rangle_{A}=|\alpha_{i}\rangle_{A}
	\end{aligned}
    \end{equation}
for $i=1,\,2,\,\ldots,\,n-1.$ Consequently, the orthogonality relations among these $n-1$ states on Alice’s side are preserved, exactly as they were in the original ensemble. Because Bob does not perform any measurement, the orthogonality relations among these $n-1$ states on Bob’s side are likewise unaffected. This completes the proof. \hfill\rule{6pt}{6pt}

\begin{figure}[H]
	\centering
	\includegraphics[width=0.25\textwidth]{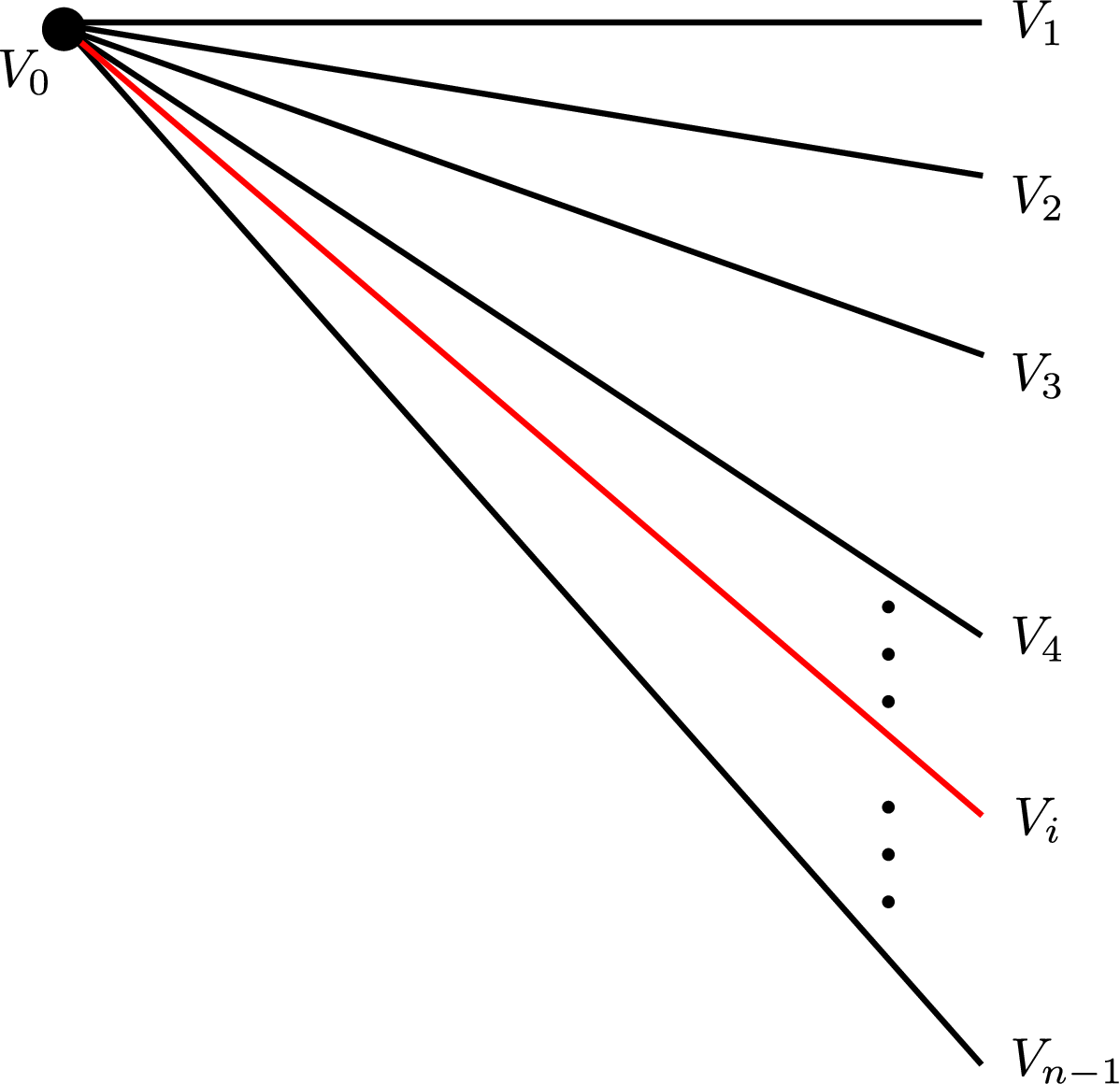}%
	\caption{Orthogonality relations between bipartite product state $V_{0}$ and bipartite product states $\{V_{i}\,|i=1,\,2,\,\ldots,\,n-1\}$.\label{fig002}}
\end{figure}

\begin{theorem}\label{th2}
For a set of bipartite OPSs, if one state is orthogonal to another state on one subsystem and orthogonal to all remaining states on the other subsystem, then this state can be perfectly discriminated via LOCC, while the orthogonality relations among all the other states can remain unchanged.
\end{theorem}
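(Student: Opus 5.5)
Set $V_0=|\alpha_0\rangle_A|\beta_0\rangle_B$, and let $V_1$ be the single state orthogonal to $V_0$ on the first subsystem, so $\langle\alpha_0|\alpha_1\rangle_A=0$. By the standing assumption that any two states are orthogonal on only one subsystem, $\langle\beta_0|\beta_1\rangle_B\neq0$. The remaining states $V_i=|\alpha_i\rangle_A|\beta_i\rangle_B$, $i=2,\ldots,n-1$, satisfy $\langle\beta_0|\beta_i\rangle_B=0$ and $\langle\alpha_0|\alpha_i\rangle_A\neq0$. The plan has two steps. First, Bob measures with the projectors $M_1=|\beta_0\rangle_B\langle\beta_0|$ and $M_2=I-|\beta_0\rangle_B\langle\beta_0|$. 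Then Alice resolves whatever ambiguity is left.

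\emph{Outcome 2.} The state is no longer $V_0$, since $M_2|\beta_0\rangle_B=0$. For $i\ge2$ we have $M_2|\beta_i\rangle_B=|\beta_i\rangle_B$ because $|\beta_i\rangle_B\perp|\beta_0\rangle_B$, so these states are unchanged. $V_1$ survives in the form $|\alpha_1\rangle_A\,M_2|\beta_1\rangle_B$, which is nonzero only when $|\beta_1\rangle_B$ is not proportional to $|\beta_0\rangle_B$.

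\emph{Outcome 1.} Every $V_i$ with $i\ge2$ is excluded, because $M_1|\beta_i\rangle_B=0$. Only $V_0$ and $V_1$ remain possible. Their Alice parts $|\alpha_0\rangle_A$ and $|\alpha_1\rangle_A$ are untouched and orthogonal, so Alice projects onto $|\alpha_0\rangle_A$ and its complement and identifies the state with certainty. This settles the branch in which $V_0$ is discriminated, which the paper proves first.

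\textbf{Main obstacle: orthogonality of the survivors.} The remaining $n-1$ states must keep their orthogonality relations in the outcome-2 branch. States with $i,j\ge2$ are unaffected. Pairs involving $V_1$ need a closer look. If $V_1$ and $V_i$ are orthogonal on Alice's side, this is preserved, since Alice's parts did not change. If instead $\langle\beta_1|\beta_i\rangle_B=0$, then
\begin{equation}
\nonumber
\langle\beta_1|M_2^{\dagger}M_2|\beta_i\rangle_B=\langle\beta_1|\beta_i\rangle_B-\langle\beta_1|\beta_0\rangle_B\langle\beta_0|\beta_i\rangle_B=0,
\end{equation}
using $\langle\beta_0|\beta_i\rangle_B=0$. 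So Bob-side orthogonality is preserved as well. Non-orthogonality on Bob's side is also preserved: when $\langle\beta_1|\beta_i\rangle_B\neq0$, the same computation gives $\langle\beta_1|M_2^{\dagger}M_2|\beta_i\rangle_B=\langle\beta_1|\beta_i\rangle_B\neq0$. Hence the orthogonality graph of the survivors is exactly the original graph with $V_0$ removed.

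\textbf{Degenerate case.} If $|\beta_1\rangle_B\propto|\beta_0\rangle_B$, a cleaner route is to let Alice measure first, projecting onto $|\alpha_1\rangle_A$, since $V_1$ is then the state distinguished from the others on Bob's side. In that situation the argument of Theorem~\ref{th1} applies directly. Even without switching, outcome 2 simply removes $V_1$ as well, and the remaining states are undisturbed, which is still consistent with the statement.
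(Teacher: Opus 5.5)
Your proposal is correct and is essentially the paper's proof with the two parties interchanged. In both, the party on whose side $V_0$ is orthogonal to all but one state projects onto $V_0$'s local vector. Outcome 1 leaves only $V_0$ and its single partner, which the other party then separates. Outcome 2 alters only the partner's local vector, and leaves its inner products with the survivors unchanged.

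Your handling of the degenerate case $|\beta_1\rangle_B\propto|\beta_0\rangle_B$ through the main protocol is right: the partner is simply excluded in outcome 2. However, the suggested alternative of Alice projecting onto $|\alpha_1\rangle_A$ and invoking Theorem~\ref{th1} does not work. In that case $V_1$ is orthogonal to $V_0$ on Alice's side but to the remaining states on Bob's side, so Theorem~\ref{th1} does not apply, and that remark should be dropped.
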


\noindent\textbf{Proof.} As shown in Fig.~\ref{fig002}, bipartite product state $V_{0}$ is orthogonal to state  $V_{i}$ on the second subsystem while  orthogonal to states  $V_{j}$ on the first subsystem for $1\leq j<i$ and $i< j\leq n-1$. Note that states in Fig.~\ref{fig002}  satisfy all the hypotheses of Theorem~\ref{th2}. It suffices to prove that these states satisfy the conclusion of Theorem~\ref{th2}. 

Suppose that the first and second subsystems of bipartite product state $V_{j}$ is denoted by $|\alpha_{j}\rangle_{A}$ and $|\beta_{j}\rangle_{B}$, respectively, for $0\leq j\leq n-1$. Let Alice perform a two-outcome POVM with measurement operators $M_{1}=\vert \alpha_{0}\rangle_{A}\langle \alpha_{0}\vert$ and $M_{2}=I-\vert \alpha_{0}\rangle_{A}\langle \alpha_{0}\vert$. (1) If the measurement outcome is 1, the measured state must be state $V_{0}$ or $V_{i}$. Because these two states are orthogonal on the second subsystem, Bob can continue with a local measurement to distinguish them. (2) If Alice’s measurement outcome is 2, then the measured state must be one of states $\{V_{j}:\,j=1,\,2,\,\ldots,\,n-1\}$ and the first subsystem that Alice holds has one of the forms as shown in Eq. (\ref{eq1}) after her measurement.

\begin{equation}
	\begin{aligned}
		\vert \alpha_{j}'\rangle_{A} &= M_{2}\vert \alpha_{j}\rangle_{A} = \vert \alpha_{j}\rangle_{A},\\
		\vert \alpha_{i}'\rangle_{A} &=M_{2}\vert \alpha_{i}\rangle_{A}= \vert \alpha_{i}\rangle_{A}-\langle \alpha_{0}\vert \alpha_{i}\rangle_{A} \vert \alpha_{0}\rangle_{A},\\
	\end{aligned}\label{eq1}
\end{equation}
where $1\leq j \leq n-1$ and $j \neq i$. We consider the orthogonality relations among post-measurement states $\{V^{\prime}_{j}:\,1\leq j\leq n-1\}$, where $V^{\prime}_{j}$ is the collapsed state of state $V_{j}$  after Alice's measurement. By Eq.~(\ref{eq1}), we have

\begin{equation}
\nonumber
	\begin{aligned}
    \langle\alpha_{j}'|\alpha_{k}'\rangle_{A} &=\langle\alpha_{j}|\alpha_{k}\rangle_{A},\\
    \langle\alpha_{j}'|\alpha_{i}'\rangle_{A} &=\langle\alpha_{j}|\alpha_{i}\rangle_{A},
    \end{aligned}
\end{equation}
where $1\leq j \leq n-1$, $j \neq i$, $1\leq k \leq n-1$ and $k \neq i$. This means that the orthogonality relations among states  $\{V_{j}:\,j=1,\,2,\,\ldots,\,n-1\}$ remain unchanged. The proof is complete. \hfill\rule{6pt}{6pt}

\begin{figure}[H]
	\centering
	\includegraphics[width=0.25\textwidth]{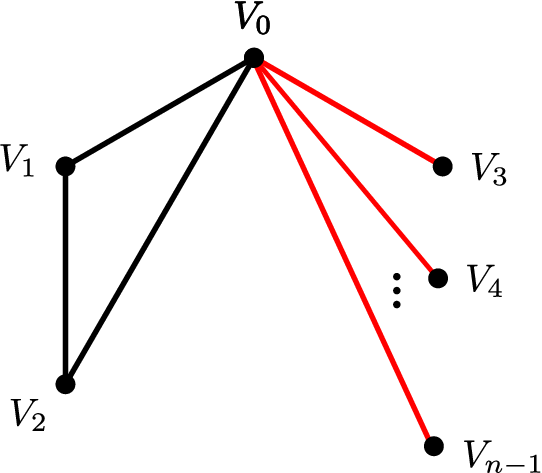}%
	\caption{Orthogonality relations among a set of bipartite OPSs $\{V_{i}\,|i=0,\,1,\,\ldots,\,n-1\}$.\label{fig003}}
\end{figure}

\begin{theorem}\label{th3}
For a set of bipartite OPSs, if three states are pairwise orthogonal on a certain subsystem, and one of the three states is orthogonal on the other subsystem to all states in the set except the three, then that state can be perfectly distinguished by LOCC while the orthogonality relations of all remaining states can stay intact. 
\end{theorem}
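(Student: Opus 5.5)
Following the pattern of the proofs of Theorem~\ref{th1} and Theorem~\ref{th2}, we fix notation as in Fig.~\ref{fig003}. Let the three states be $V_{0},V_{1},V_{2}$, pairwise orthogonal on the first subsystem (Alice's side; the other case is symmetric). Let $V_{0}$ be the distinguished state, orthogonal on the second subsystem to every $V_{j}$ with $3\leq j\leq n-1$. Write $V_{j}=|\alpha_{j}\rangle_{A}|\beta_{j}\rangle_{B}$. Since any two states are orthogonal on only one subsystem, $V_{0}$ is not orthogonal to $V_{1},V_{2}$ on Bob's side. Likewise, for $j\geq 3$, $V_{0}$ and $V_{j}$ are not orthogonal on Alice's side.

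The plan is to let Bob measure first, with the two-outcome POVM $M_{1}=|\beta_{0}\rangle_{B}\langle\beta_{0}|$ and $M_{2}=I-|\beta_{0}\rangle_{B}\langle\beta_{0}|$. Bob's projection onto $|\beta_{0}\rangle$ annihilates every $V_{j}$ with $j\geq3$, because $\langle\beta_{0}|\beta_{j}\rangle=0$.

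\textbf{Outcome 1.} The state is one of $V_{0},V_{1},V_{2}$. These are pairwise orthogonal on Alice's side, and Bob's projection leaves Alice's side untouched. Alice therefore measures in a basis containing $|\alpha_{0}\rangle,|\alpha_{1}\rangle,|\alpha_{2}\rangle$ and identifies the state, in particular $V_{0}$.

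\textbf{Outcome 2.} The state $V_{0}$ is excluded, since $M_{2}|\beta_{0}\rangle=0$. For $j\geq3$ we have $M_{2}|\beta_{j}\rangle=|\beta_{j}\rangle$, so these states are unchanged. For $V_{1}$ and $V_{2}$ the Bob parts become $|\beta_{i}'\rangle=|\beta_{i}\rangle-\langle\beta_{0}|\beta_{i}\rangle|\beta_{0}\rangle$ for $i=1,2$. Alice's parts are unchanged throughout, so all orthogonalities on the first subsystem persist.

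We then verify the Bob-side inner products, as in Eq.~(\ref{eq1}).
\begin{itemize}
\item For $j,k\geq3$ nothing changes.
\item For $j\geq3$ and $i\in\{1,2\}$, using $\langle\beta_{j}|\beta_{0}\rangle=0$ we get $\langle\beta_{j}|\beta_{i}'\rangle=\langle\beta_{j}|\beta_{i}\rangle$.
\item For the pair $V_{1},V_{2}$, the product $\langle\beta_{1}'|\beta_{2}'\rangle$ may change. However, $V_{1}$ and $V_{2}$ are already orthogonal on Alice's side, so their orthogonality relation (on the first subsystem) is intact.
\end{itemize}
So every relation among $V_{1},\ldots,V_{n-1}$ that was used in the orthogonality graph is preserved. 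Here we use that each pair is orthogonal on exactly one side, and that side is unchanged.

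The main obstacle is the residual, non-normalized pieces $|\beta_{i}'\rangle$ for $i=1,2$. One must check two things.
\begin{itemize}
\item They are nonzero. If $|\beta_{i}\rangle\propto|\beta_{0}\rangle$, then $V_{i}$ would simply be excluded by outcome 2, which only helps.
\item No \emph{new} unwanted coincidences matter. Only preservation of existing orthogonality is needed, and that is what the inner-product computation shows.
\end{itemize}
After renormalization, the post-measurement ensemble has the same orthogonality graph restricted to $V_{1},\ldots,V_{n-1}$, which completes the argument.
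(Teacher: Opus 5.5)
Your proposal is correct and follows the paper's proof essentially step for step: Bob measures $\{|\beta_{0}\rangle\langle\beta_{0}|,\, I-|\beta_{0}\rangle\langle\beta_{0}|\}$. Outcome 1 leaves only $V_{0},V_{1},V_{2}$, which Alice separates on her side. Outcome 2 leaves Alice's parts untouched, and the Bob-side inner products involving $V_{j}$, $j\geq 3$, are unchanged because $\langle\beta_{j}|\beta_{0}\rangle=0$. You also note explicitly that $\langle\beta_{1}'|\beta_{2}'\rangle$ may change but does not matter, since $V_{1},V_{2}$ are orthogonal on Alice's side and only preservation of existing orthogonalities is claimed; the paper's inner-product check passes over this point silently.
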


\noindent\textbf{Proof.} As shown in Fig.~\ref{fig003}, bipartite OPSs $\{V_{j}:j=0,\,1,\,2\}$ are pairwise orthogonal on the first subsystem; state $V_{0}$ is orthogonal to each of bipartite product states $\{V_{3},\,V_{4},\,\ldots,\,V_{n-1}\}$ on the second subsystem. That is, states in Fig.~\ref{fig003} satisfy the hypotheses of Theorem~\ref{th3}. To streamline the presentation of this proof, we carry out the derivation based on the structure depicted in Fig. \ref{fig003}. 
Suppose that the first and second subsystems of bipartite product state $V_{j}$ is denoted by $|\alpha_{j}\rangle_{A}$ and $|\beta_{j}\rangle_{B}$, respectively, where 
$\langle\alpha_{j}|\alpha_{j}\rangle_{A}=\langle\beta_{j}|\beta_{j}\rangle_{B}=1$ for $0\leq j\leq n-1$.

Initially, let Bob perform a POVM with the operators $M_{1}=\vert \beta_{0}\rangle_{B}\langle \beta_{0} \vert$ and $M_{2}=I-\vert \beta_{0}\rangle_{B}\langle \beta_{0} \vert$. (1) If the measurement outcome is 1, the measured state must be one of the states $\{V_{j}:j=0,\,1,\,2\}$. Since these three states are mutually orthogonal on Alice’s side, a subsequent measurement by Alice will uniquely identify which of the three states the measured state is. (2) If the measurement outcome is 2, the measured state must be one of the states $\{V_{j}:j=1,\,2,\,\ldots,\,n-1\}$. The second subsystem that Bob holds has one of the forms as shown in Eq. (\ref{eq2}) after his measurement. 

\begin{equation}
	\begin{aligned}
		\vert \beta_{k}'\rangle_{B} &= M_{2}\vert \beta_{k}\rangle_{B}=\vert \beta_{k}\rangle_{B} -\langle\beta_{0}\vert\beta_{k}\rangle\vert \beta_{0}\rangle_{B},\\
        \vert \beta_{j}'\rangle_{B} &=M_{2}\vert \beta_{j}\rangle_{B}= \vert \beta_{j}\rangle_{B},\\
    \end{aligned}\label{eq2}
\end{equation}
where $k=1,\,2$, and $j=3,\,4,\,\ldots,\,n-1$.

We consider the orthogonality relations among post-measurement states $\{V^{\prime}_{j}:\,1\leq j\leq n-1\}$, where $V^{\prime}_{j}$ is the collapsed state of state $V_{j}$  after Bob's measurement. By Eq.~(\ref{eq2}), we have

\begin{equation}
\nonumber
	\begin{aligned}
    \langle\beta_{j}'|\beta_{k}'\rangle_{A} &=\langle\beta_{j}|\beta_{k}\rangle_{A},\\
    \langle\beta_{j}'|\beta_{t}'\rangle_{A} &=\langle\beta_{j}|\beta_{t}\rangle_{A},\\
     \end{aligned}
\end{equation}
where $3\leq j \leq n-1$, $3\leq k \leq n-1$, and $t=1,\,2$. This means that the orthogonality relations among states  $\{V_{j}:\,j=1,\,2,\,\ldots,\,n-1\}$ remain unchanged. The proof is complete. \hfill\rule{6pt}{6pt}

\section{Local distinguishability of six orthogonal product states}\label{sec4}

In this section, we discuss the local distinguishability of six bipartite OPSs, where any two of them are orthogonal on only one subsystem. In fact, a set of six bipartite OPSs has 15 pairwise orthogonality relations if any two of these states are orthogonal on only one subsystem. On the other hand, it is evident that six bipartite OPSs with the vector of pairwise orthogonality relations $(a,\,b)$ have identical local distinguishability to those with vector $(b, \,a)$. Thus, we need to consider only one case. 

In terms of structural features, all feasible sets of six bipartite OPSs are partitioned into eight categories  by the vectors of the numbers of pairwise orthogonality relations, i.e., (15, 0), (14, 1), (13, 2), (12, 3), (11, 4), (10, 5), (9, 6), (8, 7). For ease of understanding, we characterize the structure of each set of six bipartite OPSs belonging to every category via orthogonality graphs. For these categories, there exist a total of 78 distinct configurations, among which only five cases correspond to sets of six bipartite OPSs that are not locally distinguishable via LOCC. We enumerate all feasible cases in Table~\ref{tab1} according to the above classification. Note that we have omitted configurations that coincide with listed ones either under party interchange or after state relabeling, as these cases will follow the same line of reasoning.
 \begin{table}[H]
	\centering
	\caption{All feasible configurations of six bipartite OPSs}\label{tab1}
	\begin{ruledtabular}
		\begin{tabular}{c c c c}
			Categories &   Cases can be      &  Cases for  & Total case \\
                     &     perfectly       &   others    &  numbers in \\
                     & LOCC distinguished  &             & these categories\\
			\noalign{\smallskip}\hline\noalign{\smallskip}
			(15,\,0)     & 1 & 0 & 1\\
            (14,\,1)     & 1 & 0 & 1\\
            (13,\,2)     & 2 & 0 & 2\\
            (12,\,3)     & 5 & 0 & 5\\
            (11,\,4)     & 9 & 0 & 9\\
            (10,\,5)     & 14 & 1 & 15\\
            (9,\,6)     & 19 & 2 & 21\\
            (8,\,7)     & 22 & 2 & 24\\
		\end{tabular}
	\end{ruledtabular}
\end{table}

 Subsequently, we present the orthogonality graphs for all feasible configurations of six bipartite OPSs from Fig. \ref{fig004} to Fig. \ref{fig8}. In these orthogonality graphs, a black edge indicates that two OPSs are orthogonal only on the first subsystem, while a red edge denotes that they are orthogonal only on the second subsystem. Similarly, all graphs that coincide with the existing orthogonality graphs via edge color swapping or index relabeling are omitted in this paper. In addition to depicting the orthogonality graphs of the six bipartite OPSs, these figures also illustrate the discrimination processes for the configurations that can be perfectly distinguished via LOCC. The circles of different colors in each figure represent the distinct measurement strategies employed in the corresponding LOCC protocols. In these figures, a yellow highlight on the sub-index to the right of each graph marks a case that cannot be perfectly distinguished by LOCC. There are five such cases in total.
    
\begin{figure}[H]
	\centering
	\includegraphics[width=0.48\textwidth]{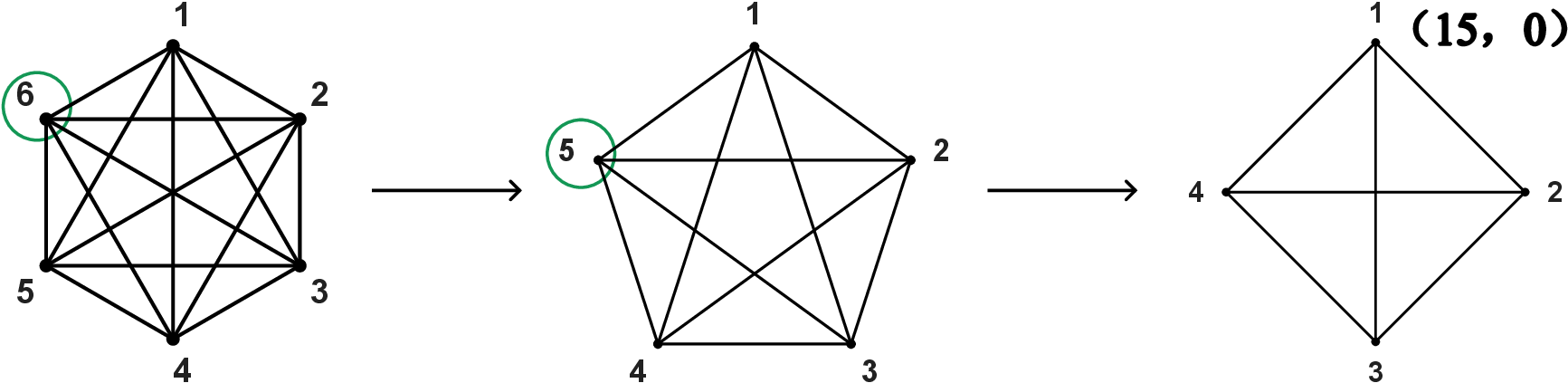}%
	\caption{Feasible orthogonality graph and discrimination process diagram for category (15, 0).\label{fig004}}
\end{figure}

\begin{figure}[H]
	\centering
	\includegraphics[width=0.48\textwidth]{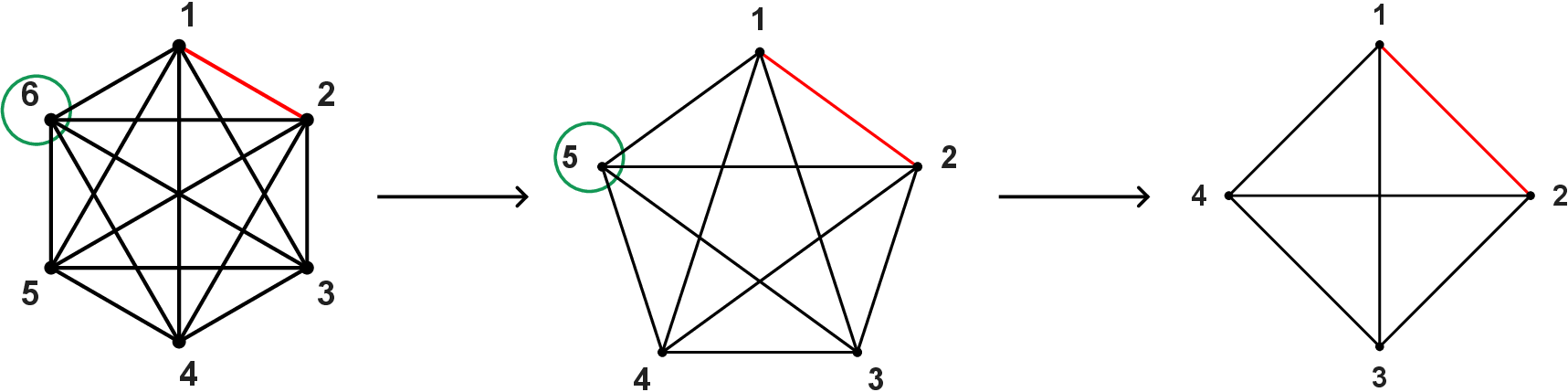}%
	\caption{Feasible orthogonality graph and discrimination process diagram for category (14, 1).\label{fig2}}
\end{figure}

\begin{figure}[H]
	\centering
	\includegraphics[width=0.48\textwidth]{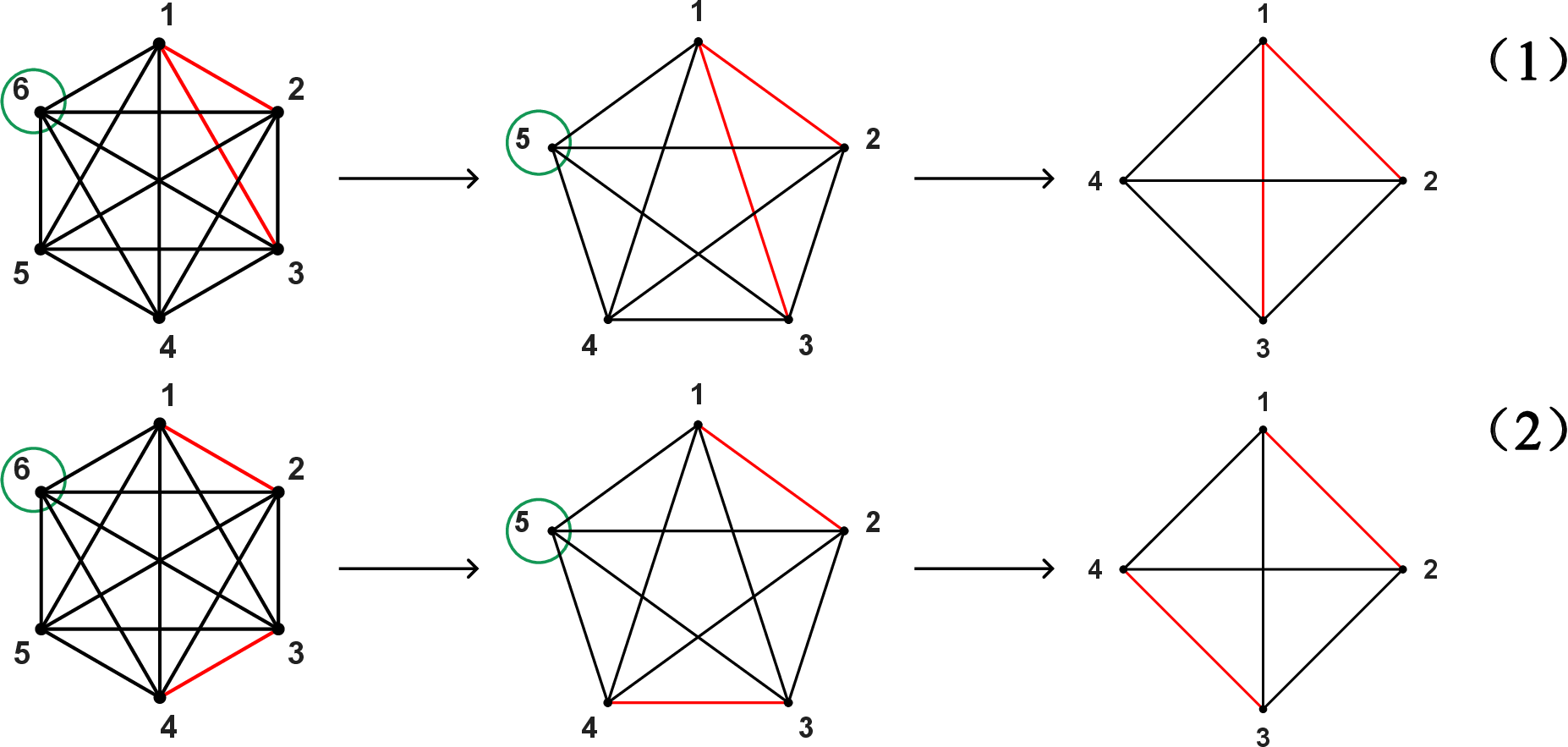}%
	\caption{Feasible orthogonality graphs and discrimination process diagrams for category (13, 2).\label{fig3}}
\end{figure}

\begin{figure}[H]
	\centering
	\includegraphics[width=0.48\textwidth]{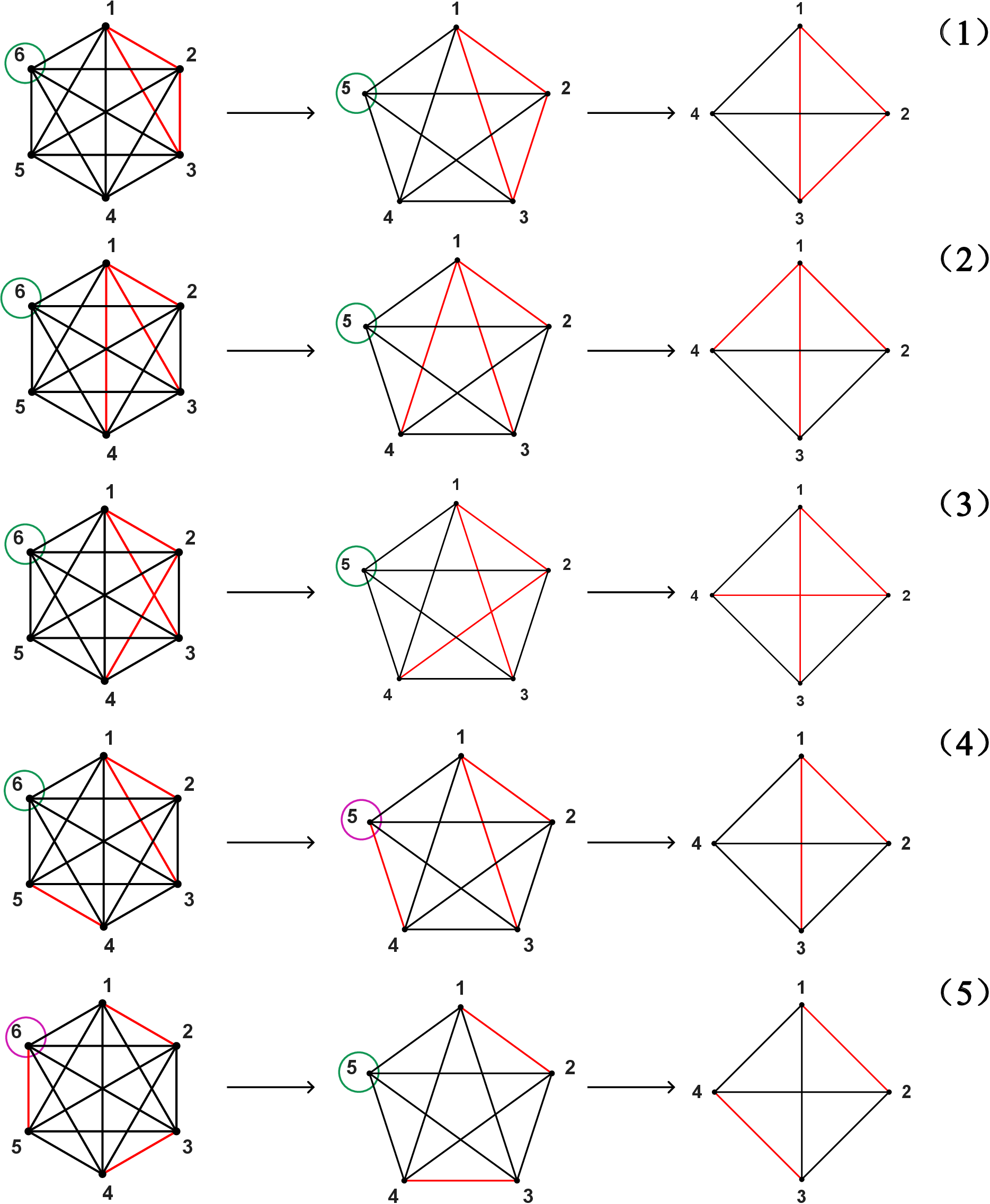}%
	\caption{Feasible orthogonality graphs and discrimination process diagrams for category (12, 3).\label{fig4}}
\end{figure}

\begin{figure*}[t]
	\centering
	\includegraphics[width=\textwidth]{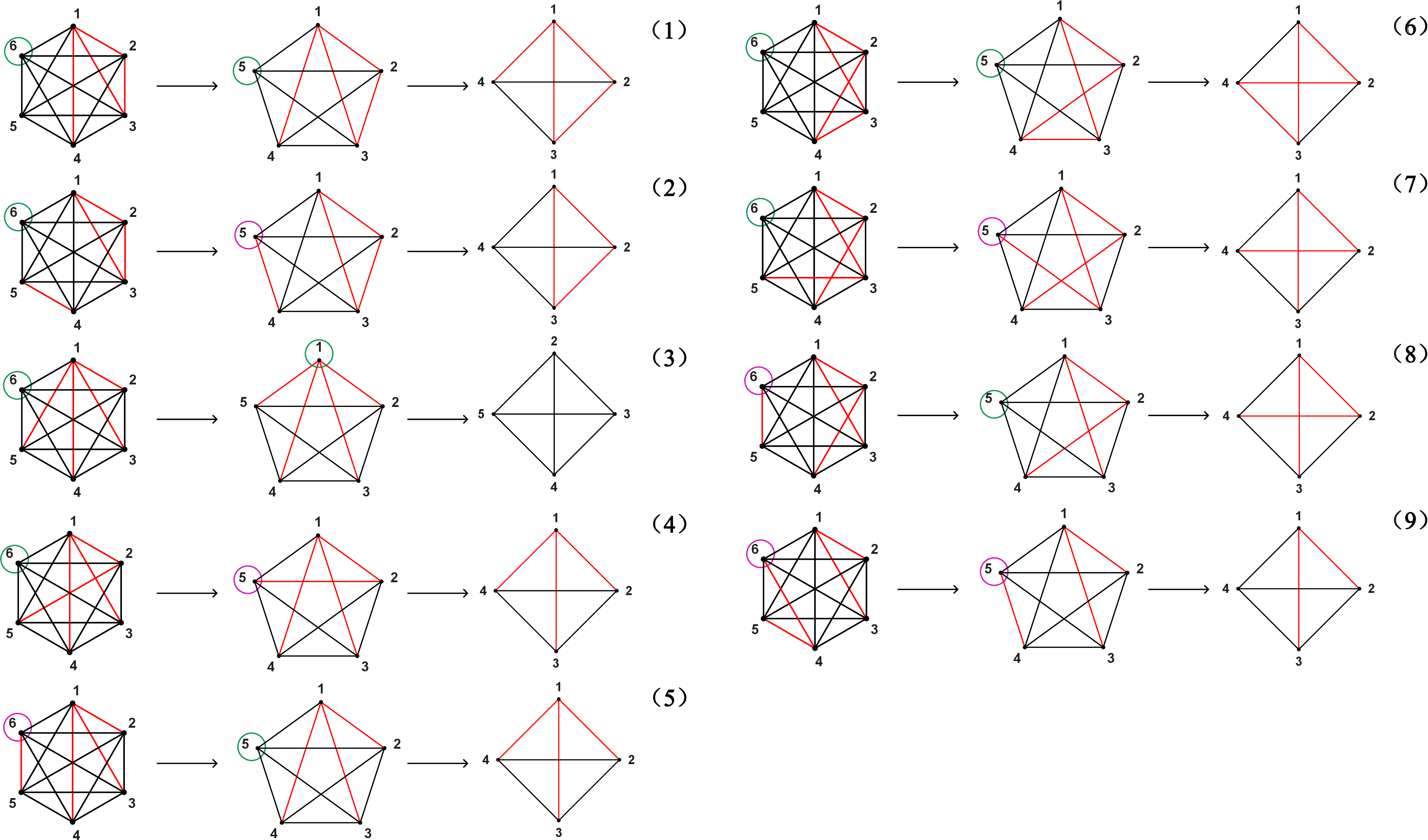}%
	\caption{Feasible orthogonality graphs and discrimination process diagrams for category (11, 4).\label{fig5}}
\end{figure*}

\begin{figure*}[t]
	\centering
	\includegraphics[width=\textwidth]{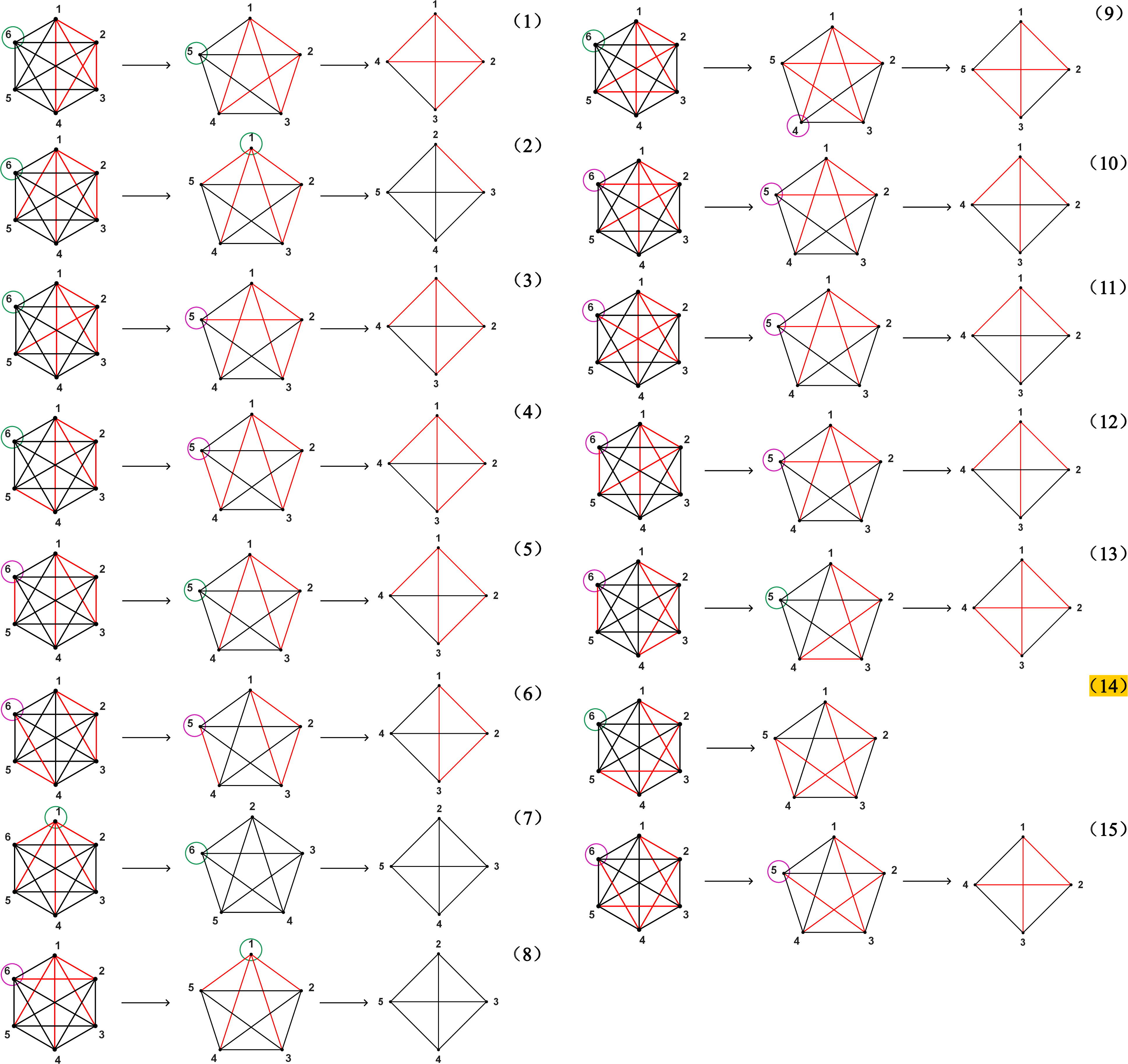}%
	\caption{Feasible orthogonality graphs and discrimination process diagrams for category (10, 5).\label{fig6}}
\end{figure*}

\begin{figure*}[t]
	\centering
	\includegraphics[width=\textwidth]{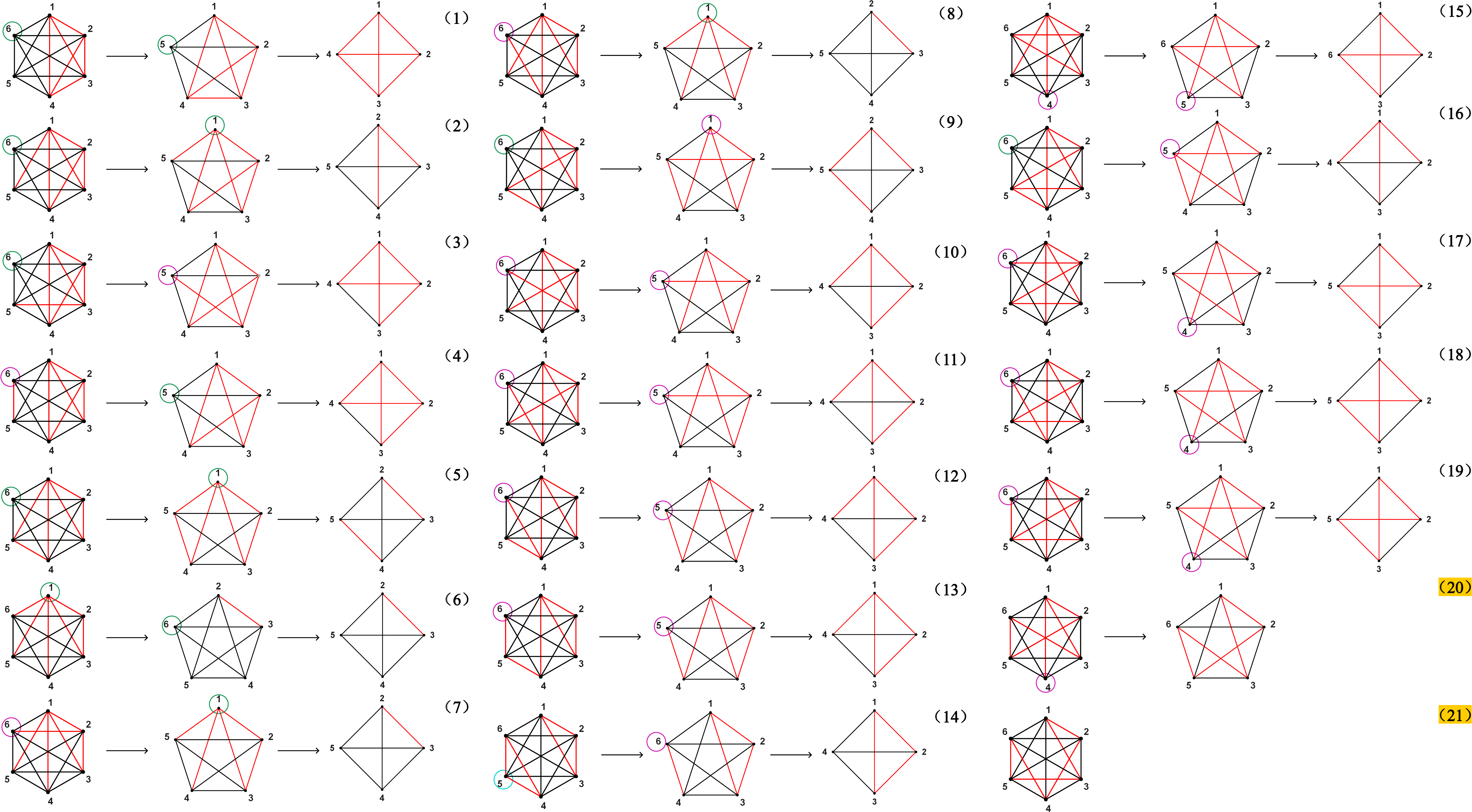}%
	\caption{Feasible orthogonality graphs and discrimination process diagrams for category (9, 6).\label{fig7}}
\end{figure*}

\begin{figure*}[t]
	\centering
	\includegraphics[width=\textwidth]{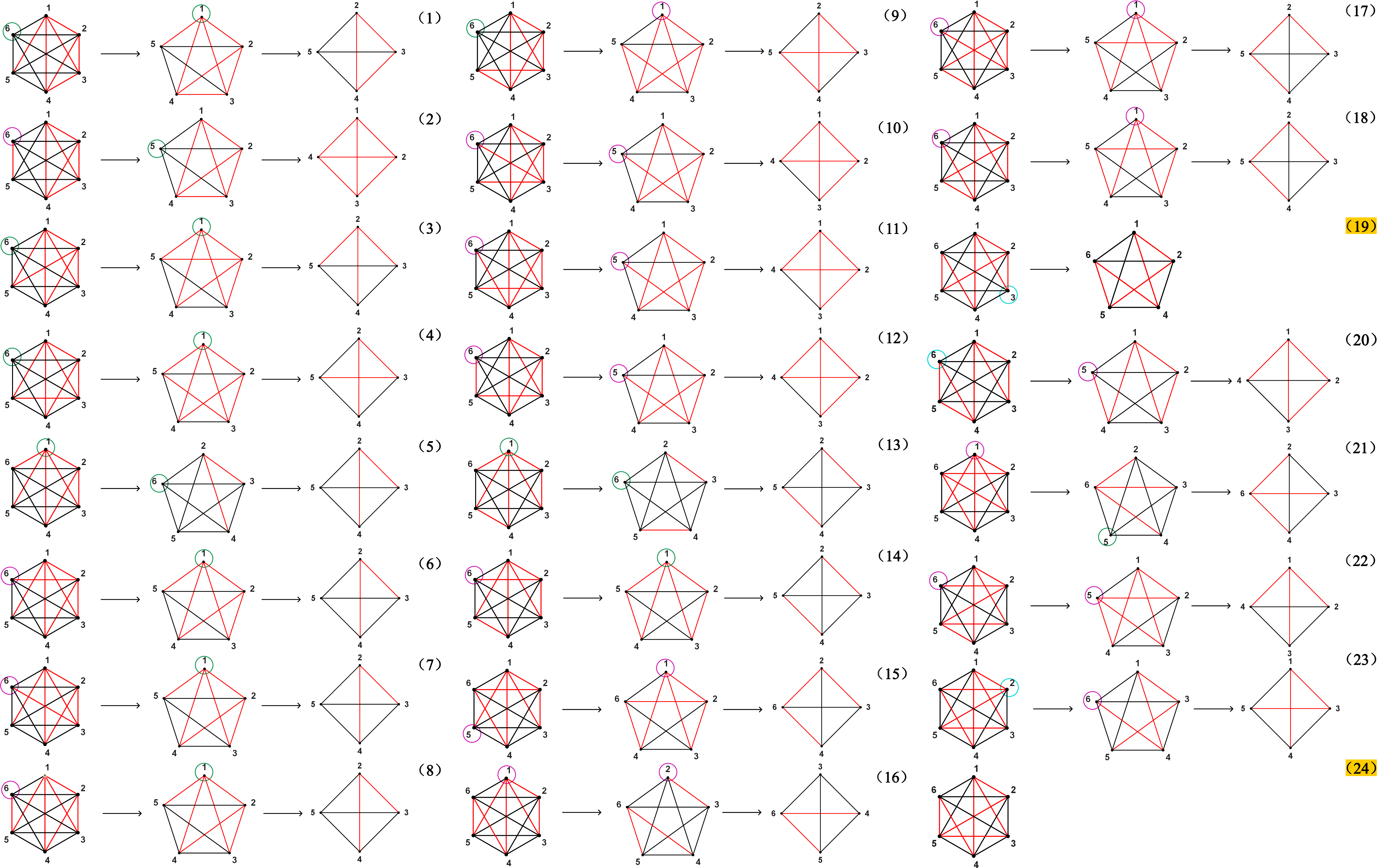}%
	\caption{Feasible orthogonality graphs and discrimination process diagrams for category (8, 7).\label{fig8}}
\end{figure*}

\begin{theorem}\label{th4}
	Six bipartite OPSs, any two of which are orthogonal on only one subsystem, can be perfectly distinguished via LOCC except that their two-colored adjacency matrix corresponding to their orthogonality graph after relabeling the vertices takes one of the following five forms

\begin{equation}\label{eq3}   
   \begin{pmatrix}0&-1&-1&1&1&1\\-1&0&1&-1&1&1\\-1&1&0&1&-1&1\\1&-1&1&0&-1&1\\1&1&-1&-1&0&1\\1&1&1&1&1&0\end{pmatrix}, \\
\end{equation}
\begin{equation}\label{eq4}
    \begin{pmatrix}0&-1&-1&-1&1&1\\-1&0&1&1&-1&1\\-1&1&0&1&1&-1\\-1&1&1&0&1&1\\1&-1&1&1&0&-1\\1&1&-1&1&-1&0\end{pmatrix}, \\\
\end{equation}
\begin{equation}\label{eq5}
   \begin{pmatrix}0&-1&-1&-1&1&1\\-1&0&-1&1&-1&1\\-1&-1&0&1&1&1\\-1&1&1&0&1&-1\\1&-1&1&1&0&-1\\1&1&1&-1&-1&0\end{pmatrix}, \\
\end{equation}
\begin{equation}\label{eq6}
   \begin{pmatrix}0&-1&-1&1&1&1\\-1&0&1&-1&1&1\\-1&1&0&1&-1&1\\1&-1&1&0&1&-1\\1&1&-1&1&0&-1\\1&1&1&-1&-1&0\end{pmatrix}, \\
\end{equation}
\begin{equation}\label{eq7}
 \begin{pmatrix}0&-1&-1&-1&1&1\\-1&0&1&1&-1&1\\-1&1&0&1&-1&1\\-1&1&1&0&1&-1\\1&-1&-1&1&0&-1\\1&1&1&-1&-1&0\end{pmatrix}. 
\end{equation}
\end{theorem}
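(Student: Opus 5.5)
The proof is a reduction on the orthogonality graph $G$ of the six states. By the hypothesis, $G$ is a complete graph $K_6$ whose edges carry exactly one of two colors. The statement is invariant under swapping the parties and under relabeling the states, so it suffices to treat the vectors $(a,b)$ with $a\ge b$, $a+b=15$. These are the eight categories $(15,0),\dots,(8,7)$, and within each we enumerate the two-colorings of $K_6$ up to isomorphism and color swap, as in Table~\ref{tab1} and Figs.~\ref{fig004}--\ref{fig8}.

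For each coloring we try to peel off one state using Theorem~\ref{th1}, Theorem~\ref{th2} or Theorem~\ref{th3}. In graph terms, we look for a vertex $v$ of one of three kinds:
\begin{itemize}
\item[(i)] $v$ has $deg_1(v)=5$ or $deg_2(v)=5$ (Theorem~\ref{th1});
\item[(ii)] $v$ has $deg_1(v)=4,\ deg_2(v)=1$, or the reverse (Theorem~\ref{th2});
\item[(iii)] $v$ lies in a monochromatic triangle of color $c$, and every edge from $v$ to the remaining three vertices has the other color (Theorem~\ref{th3}).
\end{itemize}
In each case $v$ is identified by LOCC, and the surviving five states keep their orthogonality relations. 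Those five states are still OPSs in which each pair is orthogonal on exactly one side.

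Once a state has been removed, the remaining five-state problem is settled by the known five-state result. By \cite{XHWYJ} (cf. Lemma~\ref{lm2}), five such states are LOCC-distinguishable unless $deg_1=deg_2=2$ at every vertex, i.e. unless the two color classes are complementary 5-cycles. In that exceptional situation we instead choose a different vertex $v$, or a different first measurement, so that the residual 5-vertex graph is not the pentagon configuration. If every removable vertex leaves the pentagon, we look at the measurement branches: several branches, such as outcome $1$ in Theorems~\ref{th2} and \ref{th3}, already leave at most three states, where Lemma~\ref{lm1} applies.

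Small categories are immediate. In $(15,0)$ and $(14,1)$ some vertex has $deg_1=5$, or $deg_1=4, deg_2=1$. In general, whenever $b\le 4$ some vertex has a red degree of at most $1$, by a pigeonhole argument on $\sum_v deg_2(v)=2b$. Cases with red degree $0$ fall under (i) and cases with red degree $1$ under (ii); repeating this on the five-state remainder finishes the small categories. For larger $b$, we go through the isomorphism classes of $(10,5),(9,6),(8,7)$ case by case, exactly as depicted in the figures.

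The main obstacle is completeness. We must verify that the listed 78 classes exhaust all two-colorings of $K_6$ up to symmetry, and that the five matrices \eqref{eq3}--\eqref{eq7} are precisely the classes where every vertex has degree vector $(3,2)$ or $(2,3)$ and fails (iii), or where every application of (i)--(iii) leaves a 5-cycle pair. I would organize this by degree sequences. A vertex with some color degree in $\{0,1,4,5\}$ is removable by (i) or (ii), so the hard cases have all $deg_1,deg_2\in\{2,3\}$, which forces $a,b\in[6,9]$ and explains why the exceptions lie only in $(10,5)$, $(9,6)$ and $(8,7)$. Within those degree-constrained graphs we then check (iii) and the pentagon obstruction directly.

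The stated theorem only requires exhibiting LOCC protocols for every non-exceptional case; the five matrices are simply excluded. Their actual indistinguishability, which Sec.~\ref{sec5} treats, does not need to be proved here.
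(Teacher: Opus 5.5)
\textbf{Gap in the completeness argument.} Your route is the paper's: remove one state with Theorem~\ref{th1}, \ref{th2} or \ref{th3}, then finish on the residue. Using the five-state result instead of a second peel is harmless. In $K_5$, any colouring other than the pentagon pair has a vertex of colour degrees $(4,0)$ or $(3,1)$ up to swap, so Theorem~\ref{th1} or \ref{th2} plus Lemma~\ref{lm1} finishes. Note, though, that Lemma~\ref{lm2} as quoted only gives the opposite direction. Your pigeonhole argument for $b\le 4$ is correct and neater than the paper's case list.

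The gap is the step you propose for the remaining categories: ``a vertex with a colour degree in $\{0,1,4,5\}$ is removable, so the hard cases have all degrees in $\{2,3\}$, forcing $a,b\in[6,9]$.'' Being removable does not mean the case is resolved, because the residue can be the pentagon pair. In \eqref{eq3}, vertex $6$ has degrees $(5,0)$. In \eqref{eq4}, vertex $4$ has degrees $(4,1)$. In both, deleting that vertex leaves two complementary $5$-cycles, and no other vertex is removable. Your own sentence already signals the problem: it places an exception in $(10,5)$ although $a=10\notin[6,9]$. The fallback via measurement branches does not repair this. A protocol must succeed on every outcome, and outcome $2$ of Theorems~\ref{th2} and \ref{th3} still carries the five pentagon states.

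\textbf{A correct organisation.} Split on whether some five vertices induce the pentagon pair.

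\emph{Case 1: $G-v$ is the pentagon pair.} Every $u\neq v$ then has degrees $(3,2)$ or $(2,3)$. Its two neighbours in the colour of degree $2$ are joined by an edge of the other colour, since both $5$-cycles are triangle-free. So only $v$ can be removable, its removal leaves the pentagon, and $G$ must be among the listed exceptions. Such a $G$ is fixed by the black neighbourhood $S$ of $v$, up to the dihedral symmetry of the pentagon and colour swap. This leaves four classes: $|S|=5$, $|S|=4$, and $|S|=3$ either non-consecutive or consecutive along the black cycle. These are \eqref{eq3}, \eqref{eq4}, \eqref{eq5} and \eqref{eq7}.

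\emph{Case 2: no five vertices induce the pentagon pair.} Then any removable vertex suffices, and only here is your degree argument valid. With no removable vertex, all colour degrees lie in $\{2,3\}$, so $a\in\{8,9\}$ after a swap.
For $a=9$, the black graph is one of two cubic graphs. It may be $K_{3,3}$, which is removable via (iii) because its red complement is two triangles. Otherwise it is the prism, whose red complement is $C_6$; this gives \eqref{eq6}.
For $a=8$, there are four red graphs with degree sequence $(3,3,2,2,2,2)$. Two have a removable vertex with non-pentagon residue; the other two are \eqref{eq5} and \eqref{eq7}, already covered by Case 1.

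This proves the theorem without appealing to the figures. It also shows that the five listed matrices are exactly the colourings on which the peeling strategy fails.
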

\begin{proof} As stated earlier, all feasible configurations of six bipartite OPSs are presented in Figs. 4–11. We assume that the first subsystem of state $j$ is $|\alpha_{j}\rangle_{A}$ and the second subsystem $|\beta_{j}\rangle_{B}$ for $j=1,\,2,\,\ldots,\,6$ in each case. We present the proof by cases below.

\begin{center} 1. Category (15, 0)\end{center}

As shown in Fig. \ref{fig004}, states $\{1,\,2,\,\ldots,6\}$ are pairwise orthogonal on only the first subsystem in this category. Therefore, each of states $\{1,\,2,\,\ldots,6\}$ can be perfectly distinguished by Alice with the measurement operators $\{|j\rangle\langle j|:j=1,\,2,\,\ldots,\,6\}$.

\begin{center} 2. Category (14, 1)\end{center}

As shown in Fig. \ref{fig2}, state 6 is orthogonal to each of states $\{1,$ $\,2,$ $\,\ldots,$ $\,5\}$ on only the first subsystem. By Theorem \ref{th1}, state 6 can be perfectly distinguished by LOCC and the orthogonality relations among states $\{1,$ $\,2,$ $\,\ldots,$ $\,5\}$ can remain invariant. Thus state 5 is orthogonal to each of states $\{1,$ $\,2,$ $\,3,$ $\,4\}$ on only the first subsystem. By further utilizing Theorem 1, state 5 can be locally distinguished and the orthogonality relations among states $\{1,$ $\,2,$ $\,3,$ $\,4\}$ can remain invariant. By Lemma \ref{lm1}, states $\{1,$ $\,2,$ $\,3,$ $\,4\}$ can be perfectly distinguished by LOCC.

\begin{center} 3. Category (13, 2)\end{center}

As shown in Fig. \ref{fig3}, there exist two different feasible graphs for this category. Since state 6 is orthogonal to each of states $\{1,$ $\,2,$ $\,\ldots,$ $\,5\}$ on only the first subsystem, it can be locally distinguished and the orthogonality relations among states $\{1,$ $\,2,$ $\,\ldots,$ $\,5\}$ can remain invariant by Theorem \ref{th1}. Thus state 5 is orthogonal to each of states $\{1,$ $\,2,$ $\,3,$ $\,4\}$ on only the first subsystem. By further utilizing Theorem 1, state 5 can be perfectly distinguished by LOCC and the orthogonality relations among states $\{1,$ $\,2,$ $\,3,$ $\,4\}$ can remain invariant. By Lemma \ref{lm1}, states $\{1,$ $\,2,$ $\,3,$ $\,4\}$ can be perfectly distinguished by LOCC. 

\begin{center} 4. Category (12, 3)\end{center}

As shown in Fig. \ref{fig4}, there exist five different feasible graphs for this category. 

For each of cases (1)-(3), the local discrimination process of states $\{1,\,2,\,3,4,\,5,\,6\}$ is similar to that of  Category (13, 2).

For case (4), state 6 is orthogonal to each of states $\{1,$ $\,2,$ $\,\ldots,$ $\,5\}$ on only the first subsystem, it can be locally distinguished and the orthogonality relations among states $\{1,$ $\,2,$ $\,\ldots,$ $\,5\}$ can remain invariant by Theorem \ref{th1}. Since state 5 is orthogonal to states $\{1,\,2,\,3\}$on only the first subsystem and is orthogonal to state 4 on only the second subsystem, it can be exactly identified by LOCC, and the orthogonality relations among states $\{1,$ $\,2,$ $\,3,$ $\,4\}$ can remain invariant by Theorem \ref{th2}. By Lemma \ref{lm1}, states $\{1,$ $\,2,$ $\,3,$ $\,4\}$ can be perfectly distinguished by LOCC.  

For case (5), state 6 is orthogonal to states \{1, 2, 3, 4\} on only the first subsystem and is orthogonal to state 5 on only the second subsystem. By Theorem \ref{th2},  state 6 can be perfectly distinguished by LOCC and the orthogonality relations among states $\{1,$ $\,2,$ $\,3,$ $\,4,\,5\}$ can remain invariant. By Theorem \ref{th1} and Lemma \ref{lm1}, states $\{1,$ $\,2,$ $\,3,$ $\,4,\,5\}$ can be perfectly distinguished by LOCC. 

\begin{center} 5. Category (11, 4)\end{center}

As shown in Fig. \ref{fig5}, there exist nine different feasible graphs for this category. 

For each of cases (1) and (6), state 6 is orthogonal to each of states $\{1,$ $\,2,$ $\,3,$ $\,4,\,5\}$ on only the first subsystem. By Theorem \ref{th1}, state 6 can be perfectly distinguished by Alice and the orthogonality relations among states $\{1,$ $\,2,$ $\,3,$ $\,4,\,5\}$ can remain invariant. By Theorem \ref{th1} and Lemma \ref{lm1},  states $\{1,$ $\,2,$ $\,3,$ $\,4,\,5\}$ can be perfectly distinguished by LOCC.

For each of cases (2), (4) and (7), by Theorem \ref{th1}, state 6 can be perfectly distinguished by LOCC. Then state 5 can be exactly identified by LOCC by Theorem \ref{th2}. And states 
$\{1,$ $\,2,$ $\,3,$ $\,4\}$ can be locally distinguished by Lemma \ref{lm1}.

For each of cases (5) and (8), state 6 is orthogonal to each of states $\{1,$ $\,2,$ $\,3,$ $\,4\}$ on only the first subsystem and is orthogonal to state 5 on only the second subsystem. 
By Theorem \ref{th2}, state 6 can be perfectly distinguished by LOCC and the orthogonality relations among states $\{1,$ $\,2,$ $\,3,$ $\,4,\,5\}$ can remain invariant. By Theorem \ref{th1} and Lemma \ref{lm1}, states $\{1,$ $\,2,$ $\,3,$ $\,4,\,5\}$ can be perfectly distinguished by LOCC.

For case (3), state 6 is orthogonal to each of states $\{1,$ $\,2,$ $\,3,$ $\,4,\,5\}$ on only the first subsystem. By Theorem \ref{th1},  state 6 can be perfectly distinguished
by LOCC and the orthogonality relations among states \{1, 2, 3, 4, 5\} can remain invariant. By Theorem \ref{th1} and Lemma \ref{lm1}, states $\{1,$ $\,2,$ $\,3,$ $\,4,\,5\}$ can be perfectly distinguished by LOCC since state 1 is orthogonal to each of states \{2, 3, 4, 5\}.

\begin{center} 6. Category (10, 5)\end{center}

As shown in Fig. \ref{fig6}, there exist 15 different feasible graphs for this category. 

For case (14), two-colored adjacency matrix of six bipartite OPSs is as shown in Eq. (3) and their local distinguishability will be discussed in next section.

For each of other cases, six bipartite OPSs can be perfectly distinguished by LOCC. In the subfigure corresponding to each case, state circled in green indicates that it can be perfectly distinguished by LOCC and the orthogonality relations among other remaining states can remain invariant by Theorem \ref{th1}. Similarly,  in the subfigure corresponding to each case, state circled in pink indicates that it can be perfectly distinguished by LOCC and the orthogonality relations among other remaining states can remain invariant by Theorem \ref{th2}. In these cases, one state can be identified or eliminated after each local measurement. After two local measurements, only four states remain and their original orthogonality relations stay invariant. According to Lemma \ref{lm1}, the remaining four states are locally distinguishable.

\begin{center} 7. Category (9, 6)\end{center}

As shown in Fig. \ref{fig7}, there exist twenty-one different feasible graphs for this category. 

In the subfigure corresponding to one of cases (1)-(13) and cases (15)-(19), state circled in green indicates that it can be perfectly distinguished by LOCC and the orthogonality relations among other remaining states can remain invariant by Theorem \ref{th1}. Similarly, state circled in pink indicates that it can be perfectly distinguished by LOCC and the orthogonality relations among other remaining states can remain invariant by Theorem \ref{th2}. In these cases, one state can be identified or eliminated after each local measurement. After two local measurements, only four states remain and their original orthogonality relations stay invariant. According to Lemma \ref{lm1}, the remaining four states are locally distinguishable.

For case (14), in its orthogonality graph, state 5 circled in light blue is orthogonal to states 4 and 6 on only the second subsystem while is orthogonal to states 1, 2, and 3 on only the first subsystem. By Theorem \ref{th3}, state 5 can be exactly identified by LOCC and the orthogonality relations among states $\{1,\,2,\,3,\,4,\,6\}$ can remain invariant.  By Theorem \ref{th2} and Lemma \ref{lm1}, states  $\{1,\,2,\,3,\,4,\,6\}$ can be perfectly distinguished by LOCC. 

Thus, all cases can be perfectly distinguished by LOCC except for cases (20) and (21).

\begin{center} 8. Category (8, 7)\end{center}

As shown in Fig. \ref{fig8}, there exist twenty-four different feasible graphs for this category. 

In the subfigure corresponding to one of cases (1)-(18) and cases (20)-(23), state circled in green indicates that it can be perfectly distinguished by LOCC and the orthogonality relations among other remaining states can remain invariant by Theorem \ref{th1}, state circled in pink indicates that it can be perfectly distinguished by LOCC and the orthogonality relations among other remaining states can remain invariant by Theorem \ref{th2}, and state circled in light blue indicates that it can be perfectly distinguished by LOCC and the orthogonality relations among other remaining states can remain invariant by Theorem \ref{th3}. In these cases, one state can be identified or eliminated after each local measurement. After two local measurements, only four states remain and their original orthogonality relations stay invariant. According to Lemma 1, the remaining four states are locally distinguishable. Thus, all cases can be perfectly distinguished by LOCC except for cases (19) and (24). 

In Summary, with the exception of five special cases, namely case (14) in Fig. \ref{fig6}, cases (20) and (21) in Fig. \ref{fig7}, as well as cases (19) and (24) in Fig. \ref{fig8}, six bipartite OPSs corresponding to all feasible orthogonality graphs are locally distinguishable by LOCC. Note that the two-colored adjacency matrices of these five cases are as shown in Eqs. (\ref{eq3})-(\ref{eq7}). This completes the proof.\end{proof}
    	
\section{Analysis of five special cases}\label{sec5}  

In this section we give a detailed analysis of those five special cases, i.e., case (14) in Fig. \ref{fig6}, cases (20) and (21) in Fig. \ref{fig7}, as well as cases (19) and (24) in Fig. \ref{fig8}. 

\begin{center} 1. Case (14) in Fig. \ref{fig6} \end{center}

As shown in Fig. \ref{fig6}, for case (14), state 6 is orthogonal to each of states $\{1,\,2,\,3,\,4,\,5\}$ on only the first subsystem. By Theorem \ref{th1}, state 6 can be exactly identified by LOCC and the orthogonality relations among states $\{1,\,2,\,3,\,4,\,5\}$ remain invariant. Thus, the orthogonality graph of states $\{1,\,2,\,3,\,4,\,5\}$ is shown as the right subgraph of case (14) in Fig. \ref{fig6}. It is easy to see that $deg_{1}(j)=deg_{2}(j)=2$ for $j=1,\,2,\,\ldots,\,5$. By Lemma \ref{lm2}, states $\{1,\,2,\,3,\,4,\,5\}$ cannot be perfectly distinguished by LOCC or can be locally distinguished with a certain probability.

\begin{center} 2. Case (20) in Fig. \ref{fig7} \end{center}

As shown in Fig. \ref{fig7}, for case (20), state 4 is orthogonal to each of states $\{2,\,3,\,5,\,6\}$ on only the first subsystem and is orthogonal to state 1 on only the second subsystem. By Theorem \ref{th2}, state 4 can be exactly identified by LOCC and the orthogonality relations among states $\{1,\,2,\,3,\,5,\,6\}$ remain invariant. Thus, the orthogonality graph of states $\{1,\,2,\,3,\,5,\,6\}$ is shown as the right subgraph of case (20) in Fig. \ref{fig7}. It is easy to see that $deg_{1}(j)=deg_{2}(j)=2$ for $j=1,\,2,\,3,\,5,\,6$. By Lemma \ref{lm2}, states $\{1,\,2,\,3,\,5,\,6\}$ cannot be perfectly distinguished by LOCC or can be locally distinguished with a certain probability.

\begin{center} 3. Case (19) in Fig. \ref{fig8} \end{center}

As shown in Fig. \ref{fig8}, for case (19), state 3 is orthogonal to each of states $\{4,\,5,\,6\}$ on only the first subsystem and states $\{1,\,2,\,3\}$ are pairwise orthogonal on only the second subsystem. By Theorem \ref{th3}, state 3 can be exactly distinguished by LOCC and the orthogonality relations among states $\{1,\,2,\,4,\,5,\,6\}$ remain invariant. Thus, the orthogonality graph of states $\{1,\,2,\,4,\,5,\,6\}$ is shown as the right subgraph of case (19) in Fig. \ref{fig8}. It is easy to see that $deg_{1}(j)=deg_{2}(j)=2$ for $j=1,\,2,\,4,\,5,\,6$. By Lemma \ref{lm2}, states $\{1,\,2,\,4,\,5,\,6\}$ cannot be perfectly distinguished by LOCC or can be locally distinguished with a certain probability.

\begin{center} 4. Case (21) in Fig. \ref{fig7} \end{center}

In Eq. (\ref{eq8}), we present a set of six bipartite OPSs that correspond to the orthogonality graph of case (21) in Fig. \ref{fig7}. The set in Eq. (\ref{eq8}) is locally indistinguishable by LOCC. We now give the proofs of local indistinguishability of states in Eq. (\ref{eq8}).

\begin{equation}
	\begin{aligned}
		\vert \phi_{1}\rangle &=  \vert 0\rangle_{1}\vert 0\rangle_{2},\\
		\vert \phi_{2}\rangle &= \frac{1}{\sqrt{3}} (\vert 0\rangle+\vert 1\rangle+\vert 3\rangle)_{1}\vert 1\rangle_{2},\\
		\vert \phi_{3}\rangle &=\frac{1}{\sqrt{6}} (\vert 0\rangle - \vert 2\rangle-\vert 3\rangle)_{1}(\vert 1\rangle + \vert 2\rangle)_{2},\\
		\vert \phi_{4}\rangle &=\frac{1}{\sqrt{2}}\vert 1\rangle_{1} (\vert 0\rangle + \vert 2\rangle)_{2},\\
		\vert \phi_{5}\rangle &= \frac{1}{\sqrt{6}}\vert 2\rangle_{1}(2\vert 0\rangle + \vert 1\rangle-\vert 2\rangle)_{2},\\
		\vert \phi_{6}\rangle &= \frac{1}{\sqrt{33}}(\vert 1\rangle + \vert 2\rangle-\vert 3\rangle)_{1}(\vert 0\rangle -3 \vert 1\rangle-\vert 2\rangle)_{2}.\\
	\end{aligned}\label{eq8}
\end{equation}
\noindent\textbf{Proof.}
First, we demonstrate that the set of states in Eqs.~(\ref{eq8}) is indistinguishable by LOCC regardless of which party performs the initial measurement.

Since the maximum local dimension of the first party’s subsystem is four, we assume that Alice performs a POVM, wherein the POVM elements $M_{A}^{\dag}M_{A}$ can be written as 
\[
M^\dagger_{A}M_{A}=\begin{pmatrix}
	m_{00} & m_{01} & m_{02} & m_{03}\\
	m_{10} & m_{11} & m_{12} & m_{13}\\
	m_{20} & m_{21} & m_{22} & m_{23}\\
	m_{30} & m_{31} & m_{32} & m_{33}\\
\end{pmatrix}.
\]

Since any two states that are orthogonal only on the first subsystem are required to maintain orthogonality after measurement, we have the following results. For $|\phi_{1}\rangle$ and $|\phi_{4}\rangle$, we have $\langle\phi_{1}|M^\dagger_{A}M_{A}\otimes I_{B}|\phi_{4}\rangle=0$ and $\langle\phi_{4}|M^\dagger_{A}M_{A}\otimes I_{B}|\phi_{1}\rangle=0$. Thus we have $m_{01}=m_{10}=0$. Similarly, we have $m_{02}=m_{20}=0$ by states $|\phi_{1}\rangle$ and $|\phi_{5}\rangle$; $m_{12}=m_{21}=0$ by states $|\phi_{4}\rangle$ and $|\phi_{5}\rangle$; $m_{03}=m_{30}=0$ by states $|\phi_{1}\rangle$ and $|\phi_{6}\rangle$; $m_{13}=m_{31}=0$ by states $|\phi_{3}\rangle$ and $|\phi_{4}\rangle$, and $m_{23}=m_{32}=0$ by states $|\phi_{2}\rangle$ and $|\phi_{5}\rangle$. For states $|\phi_{3}\rangle$ and $|\phi_{6}\rangle$, we have $\langle\phi_{3}|M^\dagger_{A}M_{A}\otimes I_{B}|\phi_{6}\rangle=0$ and $\langle\phi_{6}|M^\dagger_{A}M_{A}\otimes I_{B}|\phi_{3}\rangle=0$. Thus we have $m_{22}=m_{33}$. Similarly, we have $m_{00}=m_{33}$ by states $|\phi_{2}\rangle$ and $|\phi_{3}\rangle$; and 
$m_{11}=m_{33}$ by states  $|\phi_{2}\rangle$ and $|\phi_{6}\rangle$. Thus, we have 
\[
M^\dagger_{A}M_{A}=\begin{pmatrix}
	m_{33} & 0 & 0 & 0\\
	0 & m_{33} & 0 & 0\\
	0 & 0 & m_{33} & 0\\
	0 & 0 & 0 & m_{33}\\
\end{pmatrix} \propto I.
\]
Hence,  Alice cannot perform a non-trivial measurement when she goes first.

  On the other hand, since the maximum local dimension of the second subsystem is three, we assume that Bob performs a POVM, wherein the POVM elements $M^\dagger_{B}M_{B}$ can be written as 
\[
M^\dagger_{B}M_{B}=\begin{pmatrix}
	M_{00} & M_{01} & M_{02}\\
	M_{10} & M_{11} & M_{12}\\
	M_{20} & M_{21} & M_{22}\\
\end{pmatrix}.
\]

Similarly, since any two states that are orthogonal only on the send subsystem are needed to maintain orthogonality after measurement to enable discrimination process to be carried out continuously, we have the following results. For $|\phi_{1}\rangle$ and $|\phi_{2}\rangle$, we have $\langle\phi_{1}|I_{A}\otimes M^\dagger_{B}M_{B}|\phi_{2}\rangle=0$ and $\langle\phi_{2}|I_{A}\otimes M^\dagger_{B}M_{B}|\phi_{1}\rangle=0$. Thus $M_{01}=M_{10}=0$. Similarly, we have $M_{02}=M_{20}=0$ by $|\phi_{1}\rangle$ and $|\phi_{3}\rangle$, and $M_{12}=M_{21}=0$ by $|\phi_{2}\rangle$ and $|\phi_{4}\rangle$. For states $|\phi_{3}\rangle$ and $|\phi_{5}\rangle$, we have $\langle\phi_{3}|I_{A}\otimes M^\dagger_{B}M_{B}|\phi_{5}\rangle=0$ and $\langle\phi_{5}|I_{A}\otimes M^\dagger_{B}M_{B}|\phi_{3}\rangle=0$. Thus $M_{11}=M_{22}=0$. Similarly, for states $|\phi_{4}\rangle$ and $|\phi_{5}\rangle$, we have $\langle\phi_{4}|I_{A}\otimes M^\dagger_{B}M_{B}|\phi_{5}\rangle=0$ and $\langle\phi_{5}|I_{A}\otimes M^\dagger_{B}M_{B}|\phi_{4}\rangle=0$. Thus $M_{00}=M_{22}=0$. 
So, we have
\[
M^\dagger_{B}M_{B}=\begin{pmatrix}
	M_{00} & 0 & 0\\
	0 & M_{00} & 0\\
	0 & 0 & M_{00}\\
\end{pmatrix} \propto I
\]
Hence,  Bob cannot perform a non-trivial measurement when he goes first.

Therefore, the set of six bipartite OPSs in Eq. (\ref{eq8}) cannot be perfectly distinguished by LOCC. This completes the proof.\hfill\rule{6pt}{6pt}

  From Eq. (\ref{eq8}), we know that there exists a set of six bipartite OPSs corresponding to the orthogonality graph of case (21) in Fig. \ref{fig7}, which cannot be locally distinguished. In fact, we only need to alter the first state in Eq. (\ref{eq8}) to obtain Eq. (\ref{eq9}), and the resulting new set of states can be locally distinguished with a certain probability.
  
  \begin{equation}
	\begin{aligned}
		\vert \phi^{\prime}_{1}\rangle &= \frac{1}{\sqrt{2}} \vert 0\rangle_{1}(\vert 0\rangle+|3\rangle)_{2},\\
		\vert \phi_{2}\rangle &= \frac{1}{\sqrt{3}} (\vert 0\rangle+\vert 1\rangle+\vert 3\rangle)_{1}\vert 1\rangle_{2},\\
		\vert \phi_{3}\rangle &=\frac{1}{\sqrt{6}} (\vert 0\rangle - \vert 2\rangle-\vert 3\rangle)_{1}(\vert 1\rangle + \vert 2\rangle)_{2},\\
		\vert \phi_{4}\rangle &=\frac{1}{\sqrt{2}}\vert 1\rangle_{1} (\vert 0\rangle + \vert 2\rangle)_{2},\\
		\vert \phi_{5}\rangle &= \frac{1}{\sqrt{6}}\vert 2\rangle_{1}(2\vert 0\rangle + \vert 1\rangle-\vert 2\rangle)_{2},\\
		\vert \phi_{6}\rangle &= \frac{1}{\sqrt{33}}(\vert 1\rangle + \vert 2\rangle-\vert 3\rangle)_{1}(\vert 0\rangle -3 \vert 1\rangle-\vert 2\rangle)_{2}.\\
	\end{aligned}\label{eq9}
\end{equation}

For the second party, Bob performs a POVM with operators $|3\rangle\langle3|$ and $I-|3\rangle\langle3|$. If the measured state is $|\phi^{\prime}_{1}\rangle$ (the probability is 1/6, assuming those six states are equally likely), it can be identified with the probability $ \frac{1}{2} (\langle 0|+\langle3|) |3\rangle\langle3| (\vert 0\rangle+|3\rangle)=\frac{1}{2}$.

\begin{center} 5. Case (24) in Fig. \ref{fig8} \end{center}

In Eq. (\ref{eq10}), we present a set of six bipartite OPSs that correspond to case (24) in Fig. \ref{fig8}.

\begin{equation}
	\begin{aligned}
		\vert \psi_{1}\rangle &= \frac{1}{2}(\vert 0\rangle+\vert 1\rangle+\vert 2\rangle+\vert 3\rangle)_{1}\vert 1\rangle_{2},\\
        \vert \psi_{2}\rangle &= \frac{1}{\sqrt{3}}\vert 2\rangle_{1}(\vert 0\rangle+ \vert 2\rangle)_{2},\\
        \vert \psi_{3}\rangle &= \frac{1}{\sqrt{3}}\vert 1\rangle_{1}(\vert 0\rangle + \vert 2\rangle+|3\rangle)_{2},\\
		\vert \psi_{4}\rangle &= \vert 0\rangle_{1}\vert 0\rangle_{2},\\
        \vert \psi_{5}\rangle &= \frac{1}{3\sqrt{2}}(\vert 1\rangle+\vert 2\rangle-2|3\rangle)_{1}(\vert 0\rangle +\vert 1\rangle-\vert 2\rangle)_{2},\\
		\vert \psi_{6}\rangle &= \frac{1}{2}(\vert 0\rangle - \vert 3\rangle)_{1}(\vert 1\rangle + \vert 2\rangle)_{2}.\\
	\end{aligned}\label{eq10}
\end{equation}

Similar to the derivation of Eq. (\ref{eq9}) above, for the set  in Eq. (\ref{eq10}), Bob performs a POVM with operators $|3\rangle\langle3|$ and $I-|3\rangle\langle3|$. If the measured state is $|\psi_{3}\rangle$ (the probability is 1/6, assuming those six states are equally likely), it can be identified with the probability $ \frac{1}{3} (\langle 0|+\langle2|+\langle3|) |3\rangle\langle3| (\vert 0\rangle+|2\rangle+|3\rangle)=\frac{1}{3}$.

\section{Conclusions}\label{sec6}

The local distinguishability of OPSs constitutes a research topic worthy of in-depth exploration. Up to now, the local distinguishability of four or five bipartite OPSs has been well addressed, whereas that of six bipartite OPSs has not yet been effectively solved. In this paper, We take into account all possible sets formed by six bipartite orthogonal product states (OPSs). After eliminating configurations that are identical under party swapping or vertex relabeling, there exist a total of 78 distinct orthogonality graphs for six bipartite OPSs. We prove that the sets of six bipartite OPSs corresponding to 73 of these orthogonality graphs are locally distinguishable. Furthermore, we discuss the local distinguishability properties for the remaining five cases separately.  For a set of six bipartite OPSs corresponding to three of these five cases, it cannot be perfectly distinguished by LOCC, or can be locally distinguished with a certain probability. For the remaining two cases, the first admits both a set of six bipartite OPSs that are locally indistinguishable and sets of states that can be locally discriminated with a certain probability, whereas we only identify probabilistically locally distinguishable state sets in the second case. Our work provides theoretical support for judging the local distinguishability of six bipartite OPSs.
\begin{acknowledgments}
	\vspace{-10pt}
    This work is supported by Natural Science Foundation of Shandong Province of China (Grant No. ZR2023MF080) and Beijing Natural Science Foundation (Grant No. 4252014).
\end{acknowledgments}

\nocite{*}
\bibliography{apssamp}

\begin{thebibliography}{0}%
\makeatletter
\providecommand \@ifxundefined [1]{%
 \@ifx{#1\undefined}
}%
\providecommand \@ifnum [1]{%
 \ifnum #1\expandafter \@firstoftwo
 \else \expandafter \@secondoftwo
 \fi
}%
\providecommand \@ifx [1]{%
 \ifx #1\expandafter \@firstoftwo
 \else \expandafter \@secondoftwo
 \fi
}%
\providecommand \natexlab [1]{#1}%
\providecommand \enquote  [1]{``#1''}%
\providecommand \bibnamefont  [1]{#1}%
\providecommand \bibfnamefont [1]{#1}%
\providecommand \citenamefont [1]{#1}%
\providecommand \href@noop [0]{\@secondoftwo}%
\providecommand \href [0]{\begingroup \@sanitize@url \@href}%
\providecommand \@href[1]{\@@startlink{#1}\@@href}%
\providecommand \@@href[1]{\endgroup#1\@@endlink}%
\providecommand \@sanitize@url [0]{\catcode `\\12\catcode `\$12\catcode
  `\&12\catcode `\#12\catcode `\^12\catcode `\_12\catcode `\%12\relax}%
\providecommand \@@startlink[1]{}%
\providecommand \@@endlink[0]{}%
\providecommand \url  [0]{\begingroup\@sanitize@url \@url }%
\providecommand \@url [1]{\endgroup\@href {#1}{\urlprefix }}%
\providecommand \urlprefix  [0]{URL }%
\providecommand \Eprint [0]{\href }%
\providecommand \doibase [0]{https://doi.org/}%
\providecommand \selectlanguage [0]{\@gobble}%
\providecommand \bibinfo  [0]{\@secondoftwo}%
\providecommand \bibfield  [0]{\@secondoftwo}%
\providecommand \translation [1]{[#1]}%
\providecommand \BibitemOpen [0]{}%
\providecommand \bibitemStop [0]{}%
\providecommand \bibitemNoStop [0]{.\EOS\space}%
\providecommand \EOS [0]{\spacefactor3000\relax}%
\providecommand \BibitemShut  [1]{\csname bibitem#1\endcsname}%
\let\auto@bib@innerbib\@empty
\end{thebibliography}%


\begin{thebibliography}{}\label{sec:TeXbooks}
	\bibitem{1}
	C. H. Bennett, D. P. DiVincenzo, C. A. Fuchs, T. Mor, E. Rains, P. W. Shor, J. A. Smolin, and W. K. Wootters. Quantum nonlocality without entanglement. Phys. Rev. A 59, 1070 (1999).
    \bibitem{J2000}          
    J. Walgate, A. J. Short, L. Hardy, and V. Vedral, Local distinguishability of multipartite orthogonal quantum states, Phys. Rev. Lett. 85, 4972 (2000).
    \bibitem{J2002}
    J. Walgate and L. Hardy, Nonlocality, asymmetry, and distinguishing bipartite states, Phys. Rev. Lett. 89, 147901 (2002).
    \bibitem{DiVincenzo2003}
    D. P. DiVincenzo, T. Mor, P. W. Shor, J. A. Smolin, and B. M. Terhal, Unextendible product bases, uncompletable product bases and bound entanglement, Commun. Math. Phys. 238, 379 (2003).
    \bibitem{Feng2009}
    Y. Feng and Y. Shi, Characterizing locally indistinguishable orthogonal product states, IEEE Trans. Inf. Theory 55, 2799–2806 (2009). 
    \bibitem{Duan2009}
    R. Y. Duan, Y. Feng, Y. Xin, and M. S. Ying, Distinguishability of quantum states by separable operations, IEEE Trans. Inf. Theory 55, 1320–1330 (2009).
    \bibitem{Duan2014}
    E. Chitambar and R. Duan, When do local operations and classical communication suffice for two-qubit state discrimination? IEEE Trans. Inf. Theory 60, 1549–1561 (2014).
       
	\bibitem{2}    
	Z. C. Zhang, F. Gao, G. J. Tian, T. Q. Cao and Q. Y. Wen. Nonlocality of orthogonal product basis quantum states. Phys. Rev. A 90, 022313 (2014).
   \bibitem{5}
	Z. C. Zhang, F. Gao, S. J. Qin, Y. H. Yang, and Q. Y. Wen, Nonlocality of orthogonal product states, Phys. Rev. A 92, 012332 (2015).
	\bibitem{6}
	Y. L. Wang, M. S. Li, Z. J. Zheng, and S. M. Fei, Nonlocality of orthogonal product-basis quantum states, Phys. Rev. A 92, 032313 (2015).
	\bibitem{7}
	Z. C. Zhang, F. Gao ,Y. Cao, S. J. Qin and Q. Y. Wen, Local indistinguishability of orthogonal product states. Phys. Rev. A 93, 012314 (2016).
	\bibitem{Cao2023}
    H. Q. Cao, M. S. Li, and H. J. Zuo, Locally stable sets with minimum cardinality, Phys. Rev. A 108, 012418 (2023).
    \bibitem{8}
	G. B. Xu, Q. Y. Wen, S. J. Qin, Y. H. Yang, and F. Gao, Quantum nonlocality of multipartite orthogonal product states, Phys. Rev. A 93, 032341 (2016).
	\bibitem{9}   
	Y. L. Wang, M. S. Li, Z. J. Zheng and S. M. Fei, The local indistinguishability of multipartite product states. Quantum Info. Process. 16, 5 (2017).
	\bibitem{10}
	G. B Xu, Q. Y. Wen, F. Gao, S. J. Qin and H. J. Zuo, Local indistinguishability of multipartite orthogonal product bases. Quantum Info. Process. 16, 276 (2017).
	\bibitem{11}
	Z. C. Zhang, K. J. Zhang, F. Gao, Q. Y. Wen, and C. H. Oh, Construction of nonlocal multipartite quantum states, Phys. Rev. A 95, 052344 (2017).
	
	\bibitem{13}
	S. Halder, Several nonlocal sets of multipartite pure orthogonal product states, Phys. Rev. A 98, 022303 (2018). 

    \bibitem{Halder2019}
     S. Halder, M. Banik, S. Agrawal, and S. Bandyopadhyay, Strong quantum nonlocality without entanglement, Phys. Rev. Lett. 122, 040403 (2019).
     
    \bibitem{Yuan2020}  
     P. Yuan, G. Tian, and X. Sun, Strong quantum nonlocality without entanglement in multipartite quantum systems, Phys. Rev. A 102, 042228 (2020).
    \bibitem{Shi2020}
     F. Shi, M. Hu, L. Chen, and X. Zhang, Strong quantum nonlocality with entanglement, Phys. Rev. A 102, 042202 (2020).
	\bibitem{WangYL2021}
     Y. L. Wang, M. S. Li, and M. H. Yung, Graph-connectivity based strong quantum nonlocality with genuine entanglement, Phys. Rev. A 104, 012424 (2021).
    \bibitem{FSHI2022}
     F. Shi, M. S. Li, L. Chen, and X. Zhang, Strong quantum nonlocality for unextendible product bases in heterogeneous systems, J. Phys. A: Math. Theor. 55, 015305 (2022).
    \bibitem{Fshi2022}
     F. Shi, Z. Ye, L. Chen, and X. Zhang, Strong quantum nonlocality in N-partite systems, Phys. Rev. A 105, 022209 (2022).
     \bibitem{Gao2023}
     H. Zhou, T. Gao, and F. Yan, Strong quantum nonlocality without entanglement in an n-partite system with even n, Phys. Rev. A 107, 042214 (2023).
    
	\bibitem{GBX2026}     
	G. B. Xu, Z. Y. Hao, H. K. Wang, Y. G. Yang, and D. H. Jiang, Local distinguishability of four multipartite orthogonal product states, Phys. Rev. A 112, 042433 (2025).
	
    \bibitem{XHWYJ}
	G. B. Xu, Z. Y. Hao, H. K. Wang, Y. G. Yang, and D. H. Jiang, Local distinguishability of five orthogonal product states on bipartite and tripartite quantum systems. Phys. Rev. A 113, 
     052436 (2026).
    \bibitem{Bondy2008}
    J. A. Bondy and U. S. R. Murty. Graph Theory[M]. Springer Publishing Company, Incorporated, 2008.
	\bibitem{21}
	M. A. Nielsen and I. L. Chuang, Quantum Computation and Quantum Information (Cambridge University, Cambridge, 2010)
\end{thebibliography}
\end{document}